\title{Abstract Congruence Criteria for Weak Bisimilarity}
\author{Stelios Tsampas}{KU Leuven, Belgium}{stelios.tsampas@cs.kuleuven.be}{https://orcid.org/0000-0001-8981-2328}{}
\author{Christian Williams}{University of California, Riverside, United States}{williams@math.ucr.edu}{https://orcid.org/0000-0001-5466-4708}{}
\author{Andreas Nuyts}{Vrije Universiteit Brussel, Belgium}{andreas.nuyts@vub.ac.be}{https://orcid.org/0000-0002-1571-5063}{}
\author{Dominique Devriese}{Vrije Universiteit Brussel, Belgium}{dominique.devriese@vub.be}{https://orcid.org/0000-0002-3862-6856}{}
\author{Frank Piessens}{KU Leuven, Belgium}{frank.piessens@cs.kuleuven.be}{https://orcid.org/0000-0001-5438-153X}{}
\authorrunning{S. Tsampas, C. Williams, A. Nuyts, D. Devriese, and F. Piessens} 
\keywords{Structural Operational Semantics, distributive laws, weak bisimilarity} 
\begin{document}

\maketitle

\begin{abstract}
  We introduce three general compositionality criteria over operational semantics
  and prove that, when all three are satisfied together, they guarantee weak bisimulation
  being a congruence. Our work is founded upon Turi and Plotkin's mathematical
  operational semantics and the coalgebraic approach to weak bisimulation by
  Brengos. We demonstrate each criterion with various examples of success and
  failure and establish a formal connection with the simply WB cool rule format
  of Bloom and van Glabbeek. In addition, we show that the three criteria induce
  lax models in the sense of Bonchi et al.
\end{abstract}

\section{Introduction}
\label{sec:intro}

The problem of \emph{full abstraction} for programming language
semantics~\cite{10.5555/218623.218633, DBLP:journals/tcs/Gordon99}, i.e.
the perfect agreement between a denotational and an operational specification,
has been both significant and enduring. It requires that the denotational
semantics, which bestows each program with a denotation, a meaning, is
sufficiently coarse that it does not distinguish terms behaving the same
operationally. At the same time, the denotational semantics must remain a
\emph{congruence}~\cite{DBLP:conf/amast/Glabbeek93}, to make the semantics
\emph{compositional}: the denotation of a composite term is fully determined by the
denotations of its subterms irrespective of their internal structure.

From an operational point of view, the choice of behavioral equivalence is
generally open. Bisimilarity, trace equivalence, weak bisimilarity etc. are all
potentially applicable. However, bisimilarity is often too strong for practical
purposes. For instance, labelled transition
systems~\cite{Winskel_Nielsen_1993} (LTS) may perform invisible steps which are
not ignored by bisimilarity as opposed to \emph{weak}
bisimilarity~\cite{DBLP:books/sp/Milner80, DBLP:books/daglib/0067019}. This is
taken a step further in programming language semantics, where a natural definition of
program equivalence is that of the \emph{largest adequate} (w.r.t. observing termination)
\emph{congruence} relation~\cite{10.5555/1076265}. This relation is also known as
\emph{contextual equivalence}, the crown jewel of program equivalences, and can
be reformulated in a more explicit manner as \emph{Morris-style} contextual
equivalence~\cite{morris, DBLP:journals/tcs/Gordon99,
  DBLP:journals/mscs/Pitts00}, i.e. indistinguishability under all program
contexts.

Despite being the subject of vigorous research, proving that coarser behavioral
equivalences are congruences remains a hard problem.
Weak bisimilarity in particular stands out as a popular equivalence that has
seen widespread usage in the literature yet has been proven hard to reason
with~\cite{DBLP:journals/ita/Rutten99, DBLP:conf/concur/SangiorgiM92,
  DBLP:conf/cav/BaierH97}. To that end, various powerful methods exist for
proving congruence-closedness of bisimilarity, like Howe's
method~\cite{DBLP:conf/lics/Howe89, DBLP:journals/iandc/Howe96} and \emph{logical
  relations}~\cite{DBLP:journals/corr/abs-1103-0510, DBLP:conf/lics/Pitts96,
  DBLP:journals/igpl/Pitts97}, yet the machinery involved is complicated and
non-trivial to execute correctly. In addition, there have
been the so-called \emph{cool congruence formats} for weak bisimulation
introduced by Bloom~\cite{DBLP:journals/tcs/Bloom95} and van
Glabbeek~\cite{DBLP:journals/tcs/Glabbeek11} that ensure weak bisimilarity
being a congruence, but only if the semantics adhere to the rule formats.

In this work, we propose three general \emph{compositionality criteria} over
operational specifications that, when all three are satisfied, guarantee weak
bisimilarity being a congruence.
The foundation of our approach is categorical: on the one hand, the
framework of \emph{mathematical operational semantics} introduced by Turi and
Plotkin~\cite{DBLP:conf/lics/TuriP97} acts as an ideal abstract setting to
explore programming language semantics. On the other hand, the coalgebraic
approach introduced by Brengos~\cite{DBLP:journals/corr/Brengos13,
  DBLP:journals/jlp/BrengosMP15} describes a seamless way to define weak
bisimulation in categories of coalgebras, should the underlying behavior be
monadic and equipped with an order structure. With that in mind, the three
criteria, which we name \emph{continuity}, \emph{unitality} and
\emph{observability}, essentially characterize how the semantics interact with
the order structure of the behavior.

\paragraph*{Related work}

As mentioned earlier, Bloom and van Glabbeek have introduced the cool congruence
formats; weak bisimulation for any system given in these formats is guaranteed to be
a congruence. In this work we are also able to establish a formal
connection with the \emph{simply WB cool} format for LTSs (\cref{th:simply}), specifically
that any system given in the simply WB cool rule format automatically
satisfies the three criteria. In that sense, our three criteria form a broader,
less restrictive approach on the
same problem (extended beyond LTSs). Furthermore, the many examples provided by van
Glabbeek~\cite{DBLP:journals/tcs/Glabbeek11} help explain this connection and at
the same time act as excellent hands-on case studies for our three criteria.

Apart from the aforementioned work of Brengos, various approaches at the hard
problem of coalgebraic weak bisimulation have been
proposed~\cite{DBLP:conf/calco/Popescu09, DBLP:journals/entcs/RotheM02,
  DBLP:conf/icalp/GoncharovP14, DBLP:conf/csl/BonchiPPR14}. Of particular
interest is the work of Bonchi et al.~\cite{DBLP:conf/concur/BonchiPPR15} on up-to
techniques~\cite{DBLP:journals/acta/BonchiPPR17} for weak bisimulations in the
context of mathematical operational semantics. The main theoretical device in
their work is that of a \emph{lax model}, a relaxation of the notion of a
\emph{bialgebra}. We relate lax models with our work by showing that
specifications satisfying the three criteria induce lax models
(\cref{laxtheorem}) and argue that, as a formal method, our criteria are
significantly easier to prove.

\paragraph*{Paper outline}

In \cref{sec:mathop} we introduce Turi and Plotkin's mathematical
operational semantics~\cite{DBLP:conf/lics/TuriP97} as well as the two ``running''
example systems used throughout the paper. In \Cref{sec:ord} we present the
work of Brengos on weak bisimulation~\cite{DBLP:journals/corr/Brengos13,
  DBLP:journals/jlp/BrengosMP15} and show how it applies to our two examples. We
expand on our contribution in \Cref{sec:ctxsem}, where we introduce the three
criteria and put them to the test against our main examples as
well as examples from van Glabbeek~\cite{DBLP:journals/tcs/Glabbeek11}. We
subsequently formalize the connection of our three criteria with the simply WB
cool rule format (\cref{th:simply}) and then move on to present our main theorem
(\Cref{the:main}). Finally, in \cref{subsec:lax}
we show how our three criteria induce lax models in the sense
of Bonchi et al.~\cite{DBLP:conf/concur/BonchiPPR15}.

\section{Preliminaries}

\subsection{Mathematical Operational Semantics}
\label{sec:mathop}

We first summarize the basic framework of Turi and Plotkin's \emph{mathematical
  operational semantics}~\cite{DBLP:conf/lics/TuriP97}. The idea is that operational
semantics correspond to~\emph{distributive laws} of varying complexity on a base
category $\mathbb{C}$. For our work we need only consider the most
important form of distributive laws, that of \emph{GSOS
  laws}~\cite{DBLP:conf/lics/TuriP97}, as they are in an 1-1 
correspondence with the historically significant \emph{GSOS} rule
format~\cite{DBLP:journals/jlp/Plotkin04a}.

\begin{definition}
  \label{def:gsos}
  Let endofunctors $\Sigma, B : \mathbb{C} \to \mathbb{C}$ on a cartesian
  category $\mathbb{C}$. A GSOS law of $\Sigma$ over  $B$ is a natural
  transformation $\rho : \Sigma (\Id \product B) \nat B\Sigma^{*}$, where
  $\Sigma^{*}$ is the free monad over $\Sigma$.
\end{definition}

Endofunctors $\Sigma$ and $B$ are understood as the syntax and behavior (resp.) of a
system whereas $\rho$ represents the semantics. Over the course of the paper we
will also be using an alternative representation of GSOS laws given by the
following correspondence.

\begin{proposition}
  \label{prop:gsos}
  GSOS laws $\rho : \Sigma (\Id \product B) \nat B\Sigma^{*}$ are
  \emph{equivalent} to natural transformations $\lambda : \Sigma^{*} (\Id
  \product B) \nat B\Sigma^{*}$ respecting the structure of $\Sigma^{*}$ as follows:
    \begin{center}
      \begin{tikzcd}
        \Sigma^{*}\Sigma^{*} (\Id \times B) \arrow[rr, "\Sigma^{*} \langle
        \Sigma^{*}\pi_{1} {,} \lambda \rangle",
        Rightarrow] \arrow[d, "\mu_{(\Id \times B)}", Rightarrow]
        & & \Sigma^{*} (\Id \times B) \Sigma^{*} \arrow[r, "\lambda_{\Sigma^{*}}", Rightarrow]
        & B \Sigma^{*}\Sigma^{*} \arrow[d, "B \mu", Rightarrow]
        \\
        \Sigma^{*} (\Id \times B) \arrow[rrr, "\lambda", Rightarrow]
        &
        & & B\Sigma^{*}
      \end{tikzcd}
      \hfil
      \begin{tikzcd}
        \Id \times B \arrow[d, "\eta_{(\Id \times B)}"', Rightarrow] \arrow[dr,
        "B\eta \circ \pi_{2}", Rightarrow]
        \\
        \Sigma^{*} (\Id \times B) \arrow[r, "\lambda", Rightarrow]
        & B \Sigma^{*}
      \end{tikzcd}
    \end{center}
  \end{proposition}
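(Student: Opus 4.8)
The plan is to establish the equivalence by exhibiting explicit constructions in each direction and then checking they are mutually inverse. Given a GSOS law $\rho : \Sigma(\Id \times B) \nat B\Sigma^{*}$, the associated $\lambda : \Sigma^{*}(\Id \times B) \nat B\Sigma^{*}$ is defined by structural recursion on the free monad $\Sigma^{*}$. Concretely, $\lambda$ is the unique natural transformation obtained by giving $\Sigma^{*}(\Id \times B)$ the structure of a $\Sigma$-algebra: on the $\eta$-component (variables) we send $\langle x, b\rangle$ to $B\eta(b)$, i.e. $\lambda \circ \eta_{(\Id\times B)} = B\eta\circ\pi_2$, which is exactly the right-hand triangle; on the $\Sigma$-layer we use $\rho$ composed with the already-computed behavior of the subterms, paired with the subterms themselves via $\langle\Sigma^{*}\pi_1, \lambda\rangle$, followed by $B\mu$ to flatten the nested $\Sigma^{*}$. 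This is the standard ``$\lambda$ is the inductive extension of $\rho$'' recipe, and freeness of $\Sigma^{*}$ guarantees existence and uniqueness of such a $\lambda$, as well as naturality.

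Conversely, given $\lambda$ satisfying the two diagrams, one recovers $\rho$ by precomposing with the generators: $\rho := \lambda \circ \iota_{(\Id\times B)}$, where $\iota : \Sigma \nat \Sigma^{*}$ is the universal inclusion of $\Sigma$ into its free monad (so $\iota = \mu \circ \Sigma\eta \circ (\text{unit of the generating algebra})$, concretely one $\Sigma$-layer over variables). Naturality of $\rho$ is immediate from naturality of $\lambda$ and $\iota$.

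For the round-trips: starting from $\rho$, building $\lambda$, and restricting along $\iota$ recovers $\rho$ — this follows directly from the defining equation of $\lambda$ on the single-$\Sigma$-layer case, using the monad laws to simplify $B\mu$ and the fact that on depth-one terms $\langle\Sigma^{*}\pi_1,\lambda\rangle$ reduces via the $\eta$-case to $\langle\pi_1, B\eta\circ\pi_2\rangle$, i.e. essentially $\eta_{(\Id\times B)}$. Starting from a $\lambda$ satisfying both diagrams, restricting to $\rho$ and re-extending gives back $\lambda$ — here one argues by induction on term structure (equivalently, by uniqueness in the freeness property): both the original $\lambda$ and the re-extended one are $\Sigma$-algebra morphisms out of the free object agreeing on generators and on variables, the latter by the $\eta$-triangle and the former by the re-extension recipe, and the compatibility with $\mu$ (the left-hand square) is precisely what lets one propagate the agreement through composite terms.

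The main obstacle is the last round-trip: showing that an arbitrary $\lambda$ compatible with $\mu$ and $\eta$ coincides with the canonical extension of its own restriction. This is where the $\mu$-square does real work — without it, $\lambda$ need not be determined by its generators — so the argument must carefully use that square (together with the $\eta$-triangle) to set up the inductive step, or equivalently to recognize $\lambda$ as the unique morphism provided by the universal property of the free monad applied to an appropriate category of ``$\lambda$-bialgebra-like'' structures. This is a known result of Turi and Plotkin and Lenisa–Power–Watanabe; the proof is essentially bookkeeping with the monad and distributive-law coherences, but the bookkeeping is delicate and is the only non-routine part.
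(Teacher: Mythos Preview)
The paper does not actually supply a proof of this proposition; it is stated as a known correspondence (the surrounding text treats it as background, and the appendix only proves \cref{prop:cont} and \cref{th:simply}). So there is nothing to compare against on the paper's side.

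Your plan is the standard one and is correct. A few small points worth tightening before you write it out in full. First, the paper's notation for the inclusion $\Sigma \nat \Sigma^{*}$ is $\theta$, not $\iota$; your parenthetical description of it is garbled and should simply be dropped --- $\theta$ is the universal map into the free monad, nothing more needs to be said. Second, when you construct $\lambda$ from $\rho$ by structural recursion, you are implicitly using the initial-algebra presentation $\Sigma^{*}X \cong X + \Sigma\Sigma^{*}X$ of the free monad rather than its monad structure; the two coherence diagrams in the statement are phrased in terms of $\mu$ and $\eta$, so you will need a short argument (or a citation to Lenisa--Power--Watanabe) translating between ``compatible with $\mu,\eta$'' and ``defined by recursion on the algebra structure''. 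This is exactly the delicate bookkeeping you flag at the end, and it is indeed the only non-routine step. Third, in the $\rho \to \lambda \to \rho$ round-trip, your claim that on depth-one terms $\langle \Sigma^{*}\pi_1, \lambda\rangle$ reduces to $\eta_{(\Id\times B)}$ is not quite right as stated: what you get is $\Sigma\langle \pi_1, B\eta\circ\pi_2\rangle$ under a $\Sigma$-layer, and then naturality of $\rho$ together with the monad unit law collapses the extra $B\eta$ and $B\mu$. The conclusion is correct but the intermediate expression needs care.
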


\begin{remark}
  We shall be calling both GSOS laws $\rho$ and natural transformations
  $\lambda$, as used in \cref{prop:gsos}, \emph{GSOS laws} for the sake of
  brevity.
\end{remark}

GSOS laws induce \emph{bialgebras}, i.e. algebra-coalgebra pairs that agree with
the semantics.

\begin{definition}
  \label{def:bialg}
  A \emph{bialgebra} for a GSOS law $\lambda : \Sigma^{*} (\Id \product B)
  \nat B\Sigma^{*}$ (resp. $\rho : \Sigma (\Id \product B)
  \nat B\Sigma^{*}$) is a  $\Sigma$-algebra, $B$-coalgebra pair $\bialg{\Sigma
    X}{g}{X}{h}{BX}$ that commutes with $\lambda$ ($\rho$):
  \begin{center}
    \begin{tikzcd}
      \Sigma^{*} X \arrow[r, "g^{\#}"] \arrow[d, "{\Sigma^{*} \langle 1,h \rangle}"]
      & X \arrow[r, "h"]
      & B X
      \\
      \Sigma^{*} (X \product BX) \arrow[rr, "\lambda"]
      & & B \Sigma^{*} X \arrow[u, "B g^{\#}"]
    \end{tikzcd}
    \begin{tikzcd}
      \Sigma X \arrow[r, "g"] \arrow[d, "{\Sigma \langle 1,h \rangle}"]
      & X \arrow[r, "h"]
      & B X
      \\
      \Sigma (X \product BX) \arrow[rr, "\rho"]
      & & B \Sigma^{*} X \arrow[u, "B g^{\#}"]
    \end{tikzcd}
  \end{center}
  Where $g^{\#}$ is the $\mathcal{EM}$-algebra induced by $g$. A bialgebra
  \emph{morphism} from $\bialg{\Sigma
    X}{g}{X}{h}{BX}$ to $\bialg{\Sigma Y}{j}{Y}{k}{BY}$ is a map $f : X \to Y$
  that is both a $\Sigma$-algebra and a $B$-coalgebra morphism.
\end{definition}

If $a : \luarrow{\Sigma A}{A}{\cong}{}$ is the initial $\Sigma$-algebra, the
algebra of terms, and $z :
\luarrow{Z}{BZ}{\cong}{}$ is the final $B$-coalgebra, the coalgebra of
behaviors, the following proposition promotes a GSOS law to an \emph{operational
  semantics} of a language, in the form of a morphism $f$ mapping programs
living in $A$ to behaviors in $Z$.

\begin{proposition}[From \cite{DBLP:journals/tcs/Klin11,DBLP:conf/lics/TuriP97}]
  \label{distr}
  Let endofunctors $\Sigma, B : \mathbb{C} \to \mathbb{C}$ on a cartesian
  category $\mathbb{C}$ such that $a : \luarrow{\Sigma A}{A}{\cong}{}$ is the
  initial $\Sigma$-algebra and $z : \luarrow{Z}{BZ}{\cong}{}$ is the final
  $B$-coalgebra. Every GSOS law $\lambda : \Sigma^{*}(\Id \product B) \nat
  B\Sigma^{*}$ induces a
  unique initial $\lambda$-bialgebra $\bialg{\Sigma A}{a}{A}{h}{BA}$, the
  \emph{operational model} of the language,
  and a unique final $\lambda$-bialgebra $\bialg{\Sigma Z}{g}{Z}{z}{BZ}$, the
  \emph{denotational model}. In addition, there exists a unique $\lambda$-bialgebra
  morphism $f : A \to Z$, mapping every program in $A$ to its behavior in $Z$.
\end{proposition}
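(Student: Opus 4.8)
The plan is to assemble the three pieces --- the operational model, the denotational model, and the mediating morphism $f$ --- in that order, using the universal properties of $A$ and $Z$ and the $\rho$/$\lambda$ correspondence of \cref{prop:gsos}; the morphism $f$ will then fall out for free. To build the operational model I would transport the GSOS law onto the initial algebra: writing $a^{\#}\colon\Sigma^{*}A\to A$ for the induced $\mathcal{EM}$-algebra, equip $A\times BA$ with the $\Sigma$-algebra structure $\psi_{A}=\langle\,a\circ\Sigma\pi_{1},\;Ba^{\#}\circ\rho_{A}\,\rangle$ and let $m\colon A\to A\times BA$ be the unique $\Sigma$-algebra morphism out of the initial algebra. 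Since $\pi_{1}\circ\psi_{A}=a\circ\Sigma\pi_{1}$, the map $\pi_{1}\circ m$ is a $\Sigma$-algebra endomorphism of $A$ and hence $\mathrm{id}_{A}$, so $m=\langle\mathrm{id}_{A},h\rangle$ for a unique $h\colon A\to BA$. Reading off the second component of the equation $m\circ a=\psi_{A}\circ\Sigma m$ is exactly the $\rho$-bialgebra square of \cref{def:bialg} for $\bialg{\Sigma A}{a}{A}{h}{BA}$, which by \cref{prop:gsos} is equivalent to the $\lambda$-bialgebra square.

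To see that $\bialg{\Sigma A}{a}{A}{h}{BA}$ is the initial $\lambda$-bialgebra, I would take an arbitrary $\lambda$-bialgebra $\bialg{\Sigma X}{t}{X}{s}{BX}$ together with the unique $\Sigma$-algebra morphism $e\colon A\to X$, and show that $e$ is also a $B$-coalgebra morphism; uniqueness of $e$ as a $\Sigma$-algebra morphism then makes it the unique bialgebra morphism. Forming $\psi_{X}$ on $X\times BX$ just as above, the key observations are that $\langle\mathrm{id}_{X},s\rangle\colon(X,t)\to(X\times BX,\psi_{X})$ is a $\Sigma$-algebra morphism --- its second component being precisely the $\rho$-bialgebra law for $(X,t,s)$ --- and that $e\times Be\colon(A\times BA,\psi_{A})\to(X\times BX,\psi_{X})$ is one too, the latter using naturality of $\rho$ together with the identity $e\circ a^{\#}=t^{\#}\circ\Sigma^{*}e$. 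Then $\langle\mathrm{id}_{X},s\rangle\circ e$ and $(e\times Be)\circ m$ are both $\Sigma$-algebra morphisms from the initial algebra $A$ into $(X\times BX,\psi_{X})$, hence equal, and comparing second components yields $s\circ e=Be\circ h$.

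The denotational model is the formal dual. Equip $\Sigma^{*}Z$ with the $B$-coalgebra structure $\phi_{Z}=\lambda_{Z}\circ\Sigma^{*}\langle\mathrm{id}_{Z},z\rangle$, let $n\colon\Sigma^{*}Z\to Z$ be the unique $B$-coalgebra morphism into the final coalgebra, observe via the unit law of \cref{prop:gsos} that $\eta_{Z}$ is a coalgebra morphism so $n\circ\eta_{Z}=\mathrm{id}_{Z}$, and use the $\mu$-compatibility square of \cref{prop:gsos} to see that $n$ is the free extension $g^{\#}$ of a genuine $\Sigma$-algebra $g$ on $Z$ and that $\bialg{\Sigma Z}{g}{Z}{z}{BZ}$ is a $\lambda$-bialgebra; finality among $\lambda$-bialgebras is then the dual of the argument of the previous paragraph. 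With both the initial and this bialgebra in hand, $f\colon A\to Z$ is obtained as the unique $\lambda$-bialgebra morphism between them --- concretely, the unique $\Sigma$-algebra morphism $A\to Z$, which is simultaneously the unique $B$-coalgebra morphism.

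Each individual step is short, so I expect the real obstacle to be the routine but fiddly bookkeeping: passing the bialgebra squares back and forth across the $\rho$/$\lambda$ correspondence, and discharging the two naturality chases (in particular, that $e\times Be$ is a $\Sigma$-algebra morphism $(A\times BA,\psi_{A})\to(X\times BX,\psi_{X})$). One can avoid this altogether by first invoking the folklore fact that a GSOS law is precisely a distributive law of the free monad $\Sigma^{*}$ over the cofree copointed functor $\Id\times B$, under which $\lambda$-bialgebras coincide with the bialgebras of that distributive law; the proposition then becomes the standard statement that such a distributive law lifts the initial $\Sigma^{*}$-algebra $A$ to an $(\Id\times B)$-coalgebra and the final $(\Id\times B)$-coalgebra $Z$ to a $\Sigma^{*}$-algebra, with a unique bialgebra morphism between them.
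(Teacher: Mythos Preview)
The paper does not prove this proposition: it is stated with attribution to \cite{DBLP:journals/tcs/Klin11,DBLP:conf/lics/TuriP97} and no proof is given. Your proposal is correct and is precisely the standard argument one finds in those references, so there is nothing to compare against here.

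A couple of minor remarks on the write-up. First, in the denotational half you say ``use the $\mu$-compatibility square of \cref{prop:gsos} to see that $n$ is the free extension $g^{\#}$ of a genuine $\Sigma$-algebra $g$''; concretely this amounts to checking $n\circ\mu_{Z}=n\circ\Sigma^{*}n$, which one does by equipping $\Sigma^{*}\Sigma^{*}Z$ with the coalgebra $\lambda_{\Sigma^{*}Z}\circ\Sigma^{*}\langle\mathrm{id},\phi_{Z}\rangle$ and observing (via the $\mu$-square) that both composites are $B$-coalgebra morphisms into the final coalgebra. You clearly know this, but it is the one step a reader might stumble on. Second, your closing paragraph --- reducing everything to the distributive law of $\Sigma^{*}$ over the cofree copointed endofunctor $\Id\times B$ --- is exactly how Klin packages the result, and is the cleanest way to avoid the bookkeeping you flag.
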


The fact that map $f : A \to Z$ is an algebra homomorphism is a fundamental well-behavedness
property of GSOS laws, as it implies that
bisimilarity, defined as equality under $f$, is a \emph{congruence}, i.e. is
respected by all syntactic operators.

The categorical interpretation of GSOS rules can be better understood by the
following examples, which serve as the ``running'' examples of this paper.

\begin{example}[A \while~language]
  \label{ex:while}
  We introduce \while, a basic imperative language with a mutable state, whose
  syntax is generated by the following grammar:
\begin{grammar}\centering
  <prog> ::= \texttt{skip} | $v$ \texttt{:=} <expr> | <prog> ; <prog> |
  \texttt{while} <expr> <prog>
\end{grammar}

The statements are the standard \texttt{skip}, assignment \texttt{:=},
sequential composition \texttt{;} and \texttt{while}-loops. Expressions and
assignments act on a variable store whose type we denote as $S$ and we shall
be writing $[e]_{s}$ to denote evaluation of an expression $e$ under variable store
$s$.

The syntax of \while~corresponds to the $\Set$-endofunctor $\Sigma \triangleq
\term \uplus (V \times Exp) \uplus \Id^{2}
\uplus (Exp \times \Id)$, with the set of \while-programs $A$ as the
carrier of the initial $\Sigma$-algebra $a : \luarrow{\Sigma A}{A}{\cong}{}$.
The typical behavior functor  for deterministic systems with mutable state is $[S \times
(\{\checkmark\} \uplus \Id)]^{S}$, where $\{\checkmark\} \cong \term$ is
populated by the element that denotes \emph{termination}, but we will instead
use $T \triangleq [\mathcal{P}_{c}(S \times (\{\checkmark\} \uplus \Id))]^{S}$,
where $\mathcal{P}_{c}$ is the \emph{countable} power-set monad. This way the
behavior can be equipped with both a monadic and an order structure, given by
inclusion (see \cref{sec:ord}), while also allowing for non-determinism. The
carrier of the final coalgebra $z :
\luarrow{Z}{TZ}{\cong}{}$ is the set of behaviors acting on variable stores $S$
returning countably many new stores and possibly new behaviors.
\begin{gather*}
  \inference[skip]{}{\rets{s, \texttt{skip}}{s}} \qquad
  \inference[asn]{} {\rets{s, v~\texttt{:=}~e}{s_{[v \leftarrow [e]_{s}]}}} \qquad
  \inference[seq1]{\rets{s, p}{s\pr}}{\goes{s, p \texttt{;} q}{s\pr , q}} \\
  \inference[seq2]{\goes{s, p}{s\pr, p\pr}}{\goes{s, p \texttt{;} q}{s\pr , p\pr
      \texttt{;} q}} \qquad
  \inference[w1]{[e]_{s} = 0}{\rets{s, \texttt{while}~ e~p}{s}} \qquad
  \inference[w2]{[e]_{s} \neq 0 ~~ q \triangleq \texttt{while}~e~p}{\goes{s, q}{s ,
      p~;~q}}
\end{gather*}

The above semantics determines a GSOS law $\rho : \Sigma (\Id \times T) \nat
T\Sigma^{*}$ in the category $\Set$ of sets and (total) functions, or
equivalently a natural transformation $\lambda : \Sigma^{*} (\Id \times T) \nat
T\Sigma^{*}$. This example is covered in the
literature~\cite{DBLP:conf/ctcs/Turi97, DBLP:conf/cmcs/0001NDP20}, but we
include the definition for posterity.
\begin{definition}[GSOS law of \while]
\begin{align*}
  & \rho_{X} : \Sigma (X \times TX) & \to & \quad T\Sigma^{*}X  \\
  & (x,f)~\mathtt{;}~(y,g) & \mapsto & \quad \lambda s.(\{(s\pr , y)~|~(s\pr , \checkmark) \in f(s)\}~\cup~\{(s\pr, (x\pr~\mathtt{;}~y))~|~(s\pr , x\pr) \in f(s)\}) \\
  & \mathtt{while}~e~(x,f) & \mapsto & \quad \lambda~s.
                                       \begin{cases}
                                         \{(s, (x~\mathtt{;}~\mathtt{while}~e~x))\} \quad \mathrm{if}~[e]_{s} \neq 0\\
                                         \{(s, \checkmark)\} \quad \mathrm{if}~[e]_{s} = 0
                                       \end{cases} \\
  & \mathtt{skip} & \mapsto & \quad \lambda s. \{(s, \checkmark)\} \\
  & v~\mathtt{:=}~e & \mapsto & \quad \lambda s. \{(s_{[v \leftarrow ev]}, \checkmark)\}~\mathrm{for}~ev = [e]_{s}
\end{align*}
\end{definition}
To see how the semantics is connected to the GSOS law, consider rules seq1 and
seq2, corresponding to the first line in $\rho$. Roughly, writing $\goes{s,
  p}{s\pr , \checkmark}$ denotes that $(s\pr , \checkmark) \in f(s)$. The fact
that $\goes{s,p}{s\pr , \checkmark}$ and $\goes{s,p}{s\pr , q}$ are rule premises is
reflected in the construction of the set of transitions. Note also that
$q$ is used in the conclusion of rules seq1 and seq2. Similarly in $\rho$
we can see that $y$ is present in both the left and right side of $\rho$, which
is why the shape of the behavior of subterms is $X \times TX$ rather than simply $TX$.
\end{example}

\begin{example}[A simple process calculus]
  \label{ex:proc}
  We introduce a simple process calculus, or \spc~for short, based on the
  classic \emph{Calculus of communicating systems} introduced by
  Milner~\cite{DBLP:books/sp/Milner80}. Its syntax is generated by the following
  grammar:

  \begin{grammar}\centering
    <p> ::= \texttt{0} | $\delta$.<p> | <p> $\Vert$  <p> | <p> + <p>
  \end{grammar}

  From left to right we have the \emph{null process} \texttt{0}, \emph{prefixing} of a
  process $p$ with action $\delta \in  \Delta_{\tau} = \Delta \cup
  \overline{\Delta} \cup \{\tau\}$
  living in a set composed of actions $\Delta$, coactions $\overline{\Delta}$ and an
  \emph{internal} action $\tau$,
  \emph{parallel composition} and finally \emph{non-deterministic choice}. 
  \begin{gather*}
    \inference[sum1]{\goesv{P}{P\pr}{\delta}}{\goesv{P + Q}{P\pr}{\delta}} \qquad
    \inference[sum2]{\goesv{Q}{Q\pr}{\delta}}{\goesv{P + Q}{Q\pr}{\delta}} \qquad
    \inference[com1]{\goesv{P}{P\pr}{\delta}}{\goesv{P \Vert Q}{P\pr \Vert
    Q}{\delta}} \\
    \inference[com2]{\goesv{Q}{Q\pr}{\delta}}{\goesv{P \Vert Q}{P \Vert
        Q\pr}{\delta}} \quad
    \inference[syn]{\alpha \in \Delta & \goesv{P}{P\pr}{\alpha} & \goesv{Q}{Q\pr}{\overline{\alpha}}}{\goesv{P \Vert Q}{P\pr \Vert
        Q\pr}{\tau}} \quad
    \inference[prefix]{}{\goesv{\delta.P}{P}{\delta}}
  \end{gather*}
  Similarly to \cref{ex:while}, the semantics forms a GSOS law $\rho : \Sigma (\Id
  \times T) \nat T\Sigma^{*}$ in $\Set$ for $\Sigma \triangleq \top \uplus (\Delta_{\tau}
  \times \Id) \uplus \Id^{2} \uplus \Id^{2}$ and $T \triangleq
  \mathcal{P}_{c}(\Delta_{\tau} \times \Id)$. In this case, the carrier of the
  final coalgebra $z :\luarrow{Z}{TZ}{\cong}{}$
  is the set of all strongly extensional (meaning that distinct children of nodes
  are not bisimilar), countably-branching, $\Delta_{\tau}$-labelled
  trees~\cite[\S 4]{DBLP:journals/acs/AdamekLMMS15}.
  \begin{definition}[GSOS law of \spc]
    \begin{align*}
      & \rho_{X} : \Sigma (X \times TX) & \mapsto & \qquad T\Sigma^{*}X \\
      & \delta.(x , W) &\mapsto& \qquad \{(\delta, x)\} \\
      & (x,X)~\Vert~(y,Z) &\mapsto& \qquad
                                    \{(\delta,(x\pr~\Vert~y))~|~(\delta, x\pr) \in W\}
                                    ~\cup~\{(\delta,(x~\Vert~y\pr))~|~(\delta, y\pr) \in Z\} ~\cup \\
      & & & \qquad \{(\tau,(x\pr~\Vert~y\pr))~|~(\alpha, x\pr) \in W \wedge (\overline{a}, y\pr) \in Z \} \\
      & (x,W) + (y,Z) &\mapsto& \qquad W \cup Z
    \end{align*}
  \end{definition}
\end{example}


\subsection{Order-enrichment}
\label{sec:ord}

Order-enriched categories~\cite{DBLP:journals/tcs/Wand79} typically equip their
hom-sets with an \emph{order} structure. While there are many forms of
order-enrichment, one ``nice'' such form that is convenient for our purposes is
$\cpov$-enrichment.

\begin{definition}[{\cite[\S 2.3]{DBLP:journals/jlp/BrengosMP15}}]
  A category $\mathbb{C}$ is $\cpov$-enriched when
  \begin{itemize}
  \item Every hom-set $\mathbb{C}(X,Y)$ carries a partial order $\leq$ and has
    all finite joins $\vee$.
  \item Composition is left-distributive over finite joins, i.e. given any
    morphisms $f,g,i$ with suitable domains and codomains, $i \circ (f \vee g)
    = i \circ f \vee i \circ g$.    
  \item Every ascending $\omega$-chain $f_{0} \leq f_{1} \leq\ldots$ for $f_{i}
    \in \mathbb{C}(X,Y)$ has a supremum $\bigvee_{i} f_{i} \in \mathbb{C}(X,Y)$.
  \item Composition $-\circ- : \mathbb{C}(X,Y) \times \mathbb{C}(Y,Z) \to
    \mathbb{C}(X,Z)$ is continuous, meaning that for any ascending $\omega$-chains
    $f_{i}, g_{i}$ and morphisms $f,g$ with suitable (co)domains, we have
    $$g \circ \bigvee_{i} f_{i} = \bigvee_{i}(g \circ f_{i})~~\mathrm{and}~~(\bigvee_{i}
    g_{i}) \circ f = \bigvee_{i}(g_{i} \circ f)$$
  \end{itemize}
\end{definition}

For reasons that will become clear in \cref{subsec:sat}, we are interested in
monads whose Kleisli category is $\cpov$-enriched as we are 
looking to use them as behaviors in our distributive laws and exploit their
order structure. Examples of such monads are the powerset $\mathcal{P}$~\cite[\S
4.1]{DBLP:journals/jlp/BrengosMP15}, the countable powerset $\mathcal{P}_{c}$~\cite[\S
4.3]{DBLP:journals/jlp/BrengosMP15} and the convex combination monad $\mathcal{CM}$~\cite[\S
4.3]{DBLP:journals/jlp/BrengosMP15}.

\begin{example}[Continuation of \cref{ex:while}]
  \label{rem:monad}
  The monad structure $\langle T, \eta, \mu  \rangle$ for $T \triangleq
  [\mathcal{P}_{c}(S \times (\{\checkmark\} \uplus \Id))]^{S}$ is given by $
  \eta(x)(s) = \{(s, x)\}$ and $\mu(f)(s) = \bigcup_{(s\pr, g) \in f(s)} \begin{cases}
    (s\pr,\checkmark) & \mathrm{if}~g = \checkmark \\
    g(s\pr) & \mathrm{if}~g \neq \checkmark
  \end{cases}$.

  The $\cpov$-enrichment of $\kl{T}$ stems from the fact that the countable powerset monad
  $\mathcal{P}_{c}$ is itself~$\cpov$-enriched\footnote{We (slightly abusively) say that a monad $T$ is
    $\cpov$-enriched whenever $\kl{T}$ is.}. More specifically, we have $f \leq g \iff
  \forall x,s. f(x)(s) \subseteq g(x)(s)$, $f \vee g \triangleq \lambda
  x.\lambda s.[f(x)(s) \cup g(x)(s)]$ and $\bigvee_{i} f_{i} = \lambda x.\lambda
  s.\bigcup_{i} f_{i}(x)(s)$.
\end{example}

\begin{example}[Continuation of \cref{ex:proc}]
  \label{ex2}
  The monad structure $\langle T, \eta, \mu \rangle$ for $T \triangleq
  \mathcal{P}_{c}(\Delta_{\tau} \times \Id)$ was developed by Brengos~\cite[\S
  4.1]{DBLP:journals/corr/Brengos13} as a central,
  demonstrative example of the role of monads in coalgebraic weak bisimulation.
  The unit is simply $\eta(x) = \{(\tau, x)\}$ and the join $\mu_{X} :
  \mathcal{P}_{c}[\Delta_{\tau} \times \mathcal{P}_{c}(\Delta_{\tau} \times X)] \to
  \mathcal{P}_{c}(\Delta_{\tau} \times X)$ is $\mu = \mu_{\mathcal{P}_{c}} \circ \mathcal{P}_{c}\mu_{\Delta} \circ
  \mu_{\mathcal{P}_{c}} \circ \mathcal{P}_{c}st$ where $st_{X,Y} : X \times \mathcal{P}_{c}Y \to
  \mathcal{P}_{c} (X \times Y)$ is the \emph{tensorial strength} of $\mathcal{P}_{c}$
  given by
  $$st_{X,Y} (x, Y) = \{(x, y)~|~y \in Y \}$$ and $\mu_{\Delta} : \Delta_{\tau} \times \Delta_{\tau} \times X \to
  \mathcal{P}_{c} \Delta_{\tau} X$ is   $$\mu_{\Delta} =
                                            \begin{cases}
                                              (\delta , \tau , x)  \mapsto \{(\delta, x)\} \\
                                              (\tau , \delta , x)  \mapsto \{(\delta, x)\} \\
                                              (\delta_{1} , \delta_{2} , x)  \mapsto \emptyset \quad \text{when} \quad \delta_{1},\delta_{2} \neq \tau.
                                            \end{cases}$$

  In other words, $\mu$ will disallow any two-step transition that outputs two
  visible labels in a row and allow everything else, but only after redundant
  invisible labels are removed. The motivation behind the definition
  of $\mu$ will become clear in \cref{subsec:sat}. As for the $\cpov$-enrichment
  of $\kl{T}$, it is also a consequence of
  $\kl{\mathcal{P}_{c}}$ being $\cpov$-enriched, with $f \leq g \iff f(x) \subseteq
  g(x)$, $f \vee g \triangleq \lambda x.[f(x) \cup g(x)]$ and $\bigvee_{i} f_{i}
  = \lambda x.\bigcup_{i} f_{i}(x)$.
\end{example}

\subsection{Free monads in $\cpov$-enriched categories}
\label{subsec:sat}

We now turn our attention to the coalgebraic approach to weak bisimulation
of Brengos~\cite{DBLP:journals/corr/Brengos13, DBLP:journals/jlp/BrengosMP15}.
The main idea is that given an endomorphism $\alpha : X \to X$ in an
$\cpov$-enriched category, there exists the \emph{free monad} over $\alpha$,
$\alpha^{*}$ . This process, which we call the \emph{reflexive transitive
 closure} of $a$ or simply the \emph{rt-closure} of $a$, can be used to derive
saturated transition systems or the \emph{multi-step} evaluation relation of a
programming language.

\begin{remark}
  In general, a \emph{monad} in a bicategory $K$ is an endomorphism $\epsilon : X \to
  X$ equipped with 2-cells $\eta : \id_{X} \to \epsilon$  and $\mu : \epsilon
  \circ \epsilon \to \epsilon$ subject to the conditions $\mu \circ \epsilon \eta =
  \mu \circ \eta \epsilon = \id_{\epsilon}$ and $\mu \circ \epsilon \mu = \mu \circ
  \mu \epsilon $. One can recover
  the classic definition of a monad by taking the strict 2-category $\Cat$ of
  categories, functors and natural transformation as $K$. In order-enriched
  categories, where there is at
  most one 2-cell between morphisms, usually denoted by $\leq$, monads are
  simply endomorphisms $\epsilon : X \to X$ satisfying $\id_{X} \leq \epsilon$
  and $\epsilon \circ \epsilon \leq \epsilon$. Monads in order-enriched
  categories are known as \emph{closure operators}.
\end{remark}

\begin{definition}[{\cite[Definition 3.6]{DBLP:journals/jlp/BrengosMP15}}]
  \label{def:free}
  An order-enriched category $\mathbb{K}$ \emph{admits free monads} if for any
  endomorphism $\alpha : X \to X$ there exists a monad $\alpha^{*}$ such that
  \begin{itemize}
  \item $\alpha \leq \alpha^{*}$
  \item if $\alpha \leq \beta$ for a monad $\beta : X \to X$ then $\alpha^{*} \leq \beta$.
  \end{itemize}
\end{definition}

The free monad of an endomorphism in~$\cpov$-enriched categories is the least
solution of a certain assignment or, equivalently, the supremum of an ascending
$\omega$-chain.

\begin{proposition}[{\cite[\S 3.2]{DBLP:journals/jlp/BrengosMP15}}]
  \label{sat}
  For an $\cpov$-enriched category $\mathbb{C}$, the free monad $(-)^{*} :
  \forall X.   \mathbb{C}(X,X) \to \mathbb{C}(X,X)$ of $\alpha : X
  \to X$ is given by $\alpha^{*} \triangleq
  \mu x.\id \vee x \circ \alpha = \bigvee_{n < \omega}(\id \vee \alpha)^{n}$.
\end{proposition}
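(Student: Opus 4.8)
The plan is to recognise $\alpha^{*}$ as the least fixed point of the operator
$\Phi : \mathbb{C}(X,X) \to \mathbb{C}(X,X)$, $\Phi(x) = \id \vee x \circ \alpha$, and then check that this least fixed point satisfies the two clauses of \cref{def:free}. Since $\cpov$-enrichment equips each hom-set $\mathbb{C}(X,X)$ with all finite joins — in particular the empty join, a least element $\bot$ — and with suprema of ascending $\omega$-chains, $\mathbb{C}(X,X)$ is a pointed $\omega$-cpo. The operator $\Phi$ is $\omega$-continuous: using that $(-)\circ\alpha$ preserves $\omega$-suprema (continuity of composition) and that $a \vee (-)$ preserves suprema of chains (a purely order-theoretic fact), one gets $\Phi(\bigvee_i x_i) = \id \vee \bigvee_i (x_i\circ\alpha) = \bigvee_i \Phi(x_i)$. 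By Kleene's fixed-point theorem, $\Phi$ thus has a least fixed point $\mu x.\,\id \vee x\circ\alpha = \bigvee_{n<\omega}\Phi^{n}(\bot)$.

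Next I would compute the iterates explicitly. Writing $\gamma \triangleq \id \vee \alpha$, an easy induction using left-distributivity of composition over finite joins shows $\gamma^{n} = \bigvee_{k\le n}\alpha^{k}$, and hence (with $\bot\circ\alpha=\bot$, $\Phi^{0}(\bot)=\bot$, $\Phi^{1}(\bot)=\id=\gamma^{0}$) that $\Phi^{n+1}(\bot) = \gamma^{n}$. In particular the chain $(\gamma^{n})_{n}$ is ascending, so its supremum exists, and $\mu x.\,\id\vee x\circ\alpha = \bigvee_{n<\omega}\gamma^{n}$, which we take as $\alpha^{*}$; this already establishes the displayed equality in the statement.

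It then remains to verify that $\alpha^{*}$ is the free monad over $\alpha$. That $\alpha^{*}$ is a monad (closure operator) is immediate: $\id = \gamma^{0} \le \alpha^{*}$, and using continuity of composition on both sides, $\alpha^{*}\circ\alpha^{*} = \bigvee_{n}\bigvee_{m}\gamma^{n}\circ\gamma^{m} = \bigvee_{n}\bigvee_{m}\gamma^{n+m}$, and since each $\gamma^{n+m}\le\alpha^{*}$ this double supremum is $\le\alpha^{*}$. That $\alpha \le \alpha^{*}$ follows from $\alpha \le \id\vee\alpha = \gamma^{1}\le\alpha^{*}$. For minimality, let $\beta : X\to X$ be any monad with $\alpha\le\beta$; then $\beta$ is a prefixed point of $\Phi$, since $\Phi(\beta) = \id\vee\beta\circ\alpha \le \id\vee\beta\circ\beta \le \beta$, using $\id\le\beta$, $\alpha\le\beta$, monotonicity of composition, and $\beta\circ\beta\le\beta$. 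As $\alpha^{*}=\mu\Phi$ is the least prefixed point of $\Phi$ (the standard companion of Kleene's theorem: $\Phi^{n}(\bot)\le\beta$ for all $n$ by induction), we conclude $\alpha^{*}\le\beta$.

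The proof is essentially Kleene's fixed-point theorem plus bookkeeping, so I do not expect a genuine obstacle; the place where I would be most careful is confirming that the axioms of $\cpov$-enrichment really deliver everything used above — $\omega$-continuity of $\Phi$, the two-sided continuity of composition needed for $\alpha^{*}\circ\alpha^{*}\le\alpha^{*}$, left-distributivity for the unfolding $\gamma^{n}=\bigvee_{k\le n}\alpha^{k}$, and monotonicity of composition in both arguments (right-monotonicity is left-distributivity; left-monotonicity follows from continuity applied to the step-chain $f\le g\le g\le\cdots$).
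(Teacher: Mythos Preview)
The paper does not supply its own proof of this proposition; it is quoted from Brengos et al.\ and simply cited. Your Kleene fixed-point argument is the standard route taken in that source, and the overall strategy is correct.

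Two small technical slips are worth flagging, both in the paragraph where you compute the iterates. First, the identity $\gamma^{n}=\bigvee_{k\le n}\alpha^{k}$ does \emph{not} follow from left-distributivity alone: unfolding $\gamma^{2}=\gamma\vee\gamma\circ\alpha$ leaves you with $(\id\vee\alpha)\circ\alpha$, and turning that into $\alpha\vee\alpha^{2}$ is precisely right-distributivity, which the definition of $\cpov$-enrichment does not postulate. Second, $\bot\circ\alpha=\bot$ is not among the axioms either (continuity is stated only for $\omega$-chains, and left-distributivity acts on the other side), so $\Phi^{1}(\bot)=\id$ is not immediate.

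Neither identity is actually needed. Left-distributivity alone gives the recursion $\gamma^{n+1}=\gamma^{n}\vee\gamma^{n}\circ\alpha$, which already shows $(\gamma^{n})_{n}$ is ascending. One then checks directly that $\bigvee_{n}\gamma^{n}$ is a fixed point of $\Phi$ (using two-sided continuity of composition) and that it lies below every pre-fixed point (an easy induction using the recursion and right-monotonicity). This bypasses both the expansion of $\gamma^{n}$ into powers of $\alpha$ and the computation of $\Phi^{n}(\bot)$; equivalently, you can view $(\gamma^{n})_{n}$ as the Kleene iteration started from $\id$ rather than $\bot$, since one can verify $\Phi(\gamma^{n})=\gamma^{n+1}$ without the problematic identities. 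Your verification that $\alpha^{*}$ is a closure operator and is least among monads above $\alpha$ is fine as written.
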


Free monads in~$\cpov$-enriched categories enjoy a number of interesting properties.

\begin{lemma}[Properties of free monads]
  \label{satprop}
  For all $\alpha : X \to X$, $\beta : Y \to Y$, we have
  \begin{enumerate}[i.]
  \item $\forall f : X \to Y.~f \circ \alpha \leq \beta \circ f \implies f
    \circ \alpha^{*} \leq \beta^{*} \circ f$
  \item $\alpha^{**} = \alpha^{*}$
  \item $\alpha^{*} = \alpha^{*} \circ \alpha^{*}$
  \item $\id^{*} = \id$.
  \end{enumerate}
\end{lemma}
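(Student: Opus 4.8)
The plan is to get parts (iv), (iii) and (ii) by pure order-theoretic manipulation from the universal property in \cref{def:free}, together with the observation (from the remark on monads in order-enriched categories) that a monad here is exactly a closure operator, i.e.\ an endomorphism $\epsilon$ with $\id \le \epsilon$ and $\epsilon\circ\epsilon \le \epsilon$; in particular $\alpha^{*}$ itself satisfies $\id \le \alpha^{*}$ and $\alpha^{*}\circ\alpha^{*}\le\alpha^{*}$. Part (i) is the only one with real content, and I would prove it by induction over the approximating $\omega$-chain of \cref{sat}. Throughout I use that composition is monotone in each argument and left-distributes over finite joins, both part of $\cpov$-enrichment.

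For (iv): $\id$ is a closure operator ($\id\le\id$ and $\id\circ\id=\id\le\id$), hence a monad with $\id\le\id$, so the second clause of \cref{def:free} gives $\id^{*}\le\id$, while the first gives $\id\le\id^{*}$; equivalently, \cref{sat} gives $\id^{*}=\bigvee_{n}(\id\vee\id)^{n}=\bigvee_{n}\id=\id$. For (iii): $\alpha^{*}\circ\alpha^{*}\le\alpha^{*}$ is immediate from $\alpha^{*}$ being a monad, and post-composing $\id\le\alpha^{*}$ with $\alpha^{*}$ yields $\alpha^{*}=\alpha^{*}\circ\id\le\alpha^{*}\circ\alpha^{*}$. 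For (ii): the first clause of \cref{def:free} applied to the endomorphism $\alpha^{*}$ gives $\alpha^{*}\le\alpha^{**}$; conversely $\alpha^{*}$ is a monad with $\alpha^{*}\le\alpha^{*}$, so the second clause applied to $\alpha^{*}$ gives $\alpha^{**}\le\alpha^{*}$.

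For (i), write $\gamma\triangleq\id\vee\alpha$, so that $\alpha^{*}=\bigvee_{n<\omega}\gamma^{n}$ by \cref{sat}. By continuity of composition, $f\circ\alpha^{*}=\bigvee_{n}(f\circ\gamma^{n})$, so it suffices to prove $f\circ\gamma^{n}\le\beta^{*}\circ f$ for all $n$ by induction. The base case $n=0$ is $f=\id\circ f\le\beta^{*}\circ f$, using $\id\le\beta^{*}$ and monotonicity. For the step, using $\gamma^{n+1}=\gamma^{n}\circ\gamma$, monotonicity, associativity, left-distributivity, the hypothesis $f\circ\alpha\le\beta\circ f$, and the facts $\beta\le\beta^{*}$, $\id\le\beta^{*}$, $\beta^{*}\circ\beta^{*}\le\beta^{*}$:
\[
f\circ\gamma^{n+1}=(f\circ\gamma^{n})\circ\gamma\le(\beta^{*}\circ f)\circ\gamma=\beta^{*}\circ(f\vee f\circ\alpha)=(\beta^{*}\circ f)\vee(\beta^{*}\circ f\circ\alpha),
\]
and since $\beta^{*}\circ f\circ\alpha\le\beta^{*}\circ\beta\circ f\le\beta^{*}\circ\beta^{*}\circ f\le\beta^{*}\circ f$, the whole right-hand side is $\le\beta^{*}\circ f$. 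Taking the supremum over $n$ gives $f\circ\alpha^{*}\le\beta^{*}\circ f$.

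The main obstacle — such as it is — is the bookkeeping in (i): one must move $f$ through the supremum via continuity, expand $f\circ(\id\vee\alpha)$ using only \emph{left}-distributivity of composition over joins (which is all the $\cpov$-axioms provide), and apply the induction hypothesis on the correct side by writing $\gamma^{n+1}$ as $\gamma^{n}\circ\gamma$ rather than $\gamma\circ\gamma^{n}$. Parts (ii)--(iv) are immediate once one unwinds the universal property of $(-)^{*}$ and the closure-operator description of monads.
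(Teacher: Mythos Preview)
Your argument is correct. The paper states \cref{satprop} without proof, so there is nothing to compare against; your route via the universal property of \cref{def:free} for (ii)--(iv) and the explicit $\omega$-chain description of \cref{sat} for (i) is exactly the natural one, and all the order-theoretic steps check out. One minor remark: your closing comment that one \emph{must} write $\gamma^{n+1}=\gamma^{n}\circ\gamma$ rather than $\gamma\circ\gamma^{n}$ is slightly overstated --- the other decomposition works just as well, since $f\circ\gamma\le\beta^{*}\circ f$ already holds by the same one-step calculation, after which the induction hypothesis and $\beta^{*}\circ\beta^{*}\le\beta^{*}$ finish the job --- but this does not affect the validity of what you wrote.
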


We can now revisit \cref{ex:while} and \cref{ex:proc} to witness how the
rt-closure acts on each operational model $h : A \to TA$. Note that
throughout the paper we shall be using the notation $-\diamond-$ to denote
composition in Kleisli categories.

\begin{example}[Continuation of \cref{ex:while}]
  The rt-closure (free monad) $h^{*}$ of the operational model $h : A \to TA$
  amounts to the reflexive, transitive closure of $h$. The initial stage $(\id
  \vee h)^{0} = \id_{\kl{T}} = \eta_{A}$ takes care of the \emph{reflexive}
  step, while stages $(\id \vee h)^{n + 1} = (\id \vee h)^n \diamond (\id \vee
  h) = (\id \vee h)^{n} \vee ((\id \vee h)^{n} \diamond h)$ amount to the
  inductive, \emph{transitive} step, which acts according to the definition of
  monadic composition. For instance, (s, \texttt{skip ; skip}) weakly
  transitions to (s, \texttt{skip ; skip}), (s, \texttt{skip}) and (s ,
  \checkmark).
\end{example}

\begin{example}[Continuation of \cref{ex:proc}]
  The rt-closure $h^{*}$ amounts to the saturation of $h : A \to TA$. The unit $\eta$
  establishes $\Goesv{p}{p}{\tau}$ as a silent step in $h^{*}$ for $p \in A$
  whereas $\mu$ ensures that one-step transitions in the saturated system
  \emph{cannot} produce more than one visible label, i.e. they will always be of
  the sort of $\Goesv{p}{\goesv{q}{\Goesv{r}{s}{\tau^{*}}}{\delta}}{\tau^{*}}$
  or $\Goesv{p}{q}{\tau^{*}}$. The reader may refer
  to~\cite{DBLP:journals/corr/Brengos13, DBLP:journals/jlp/BrengosMP15} for more details.
\end{example}

\subsection{Weak bisimulation}

We are now ready to define \emph{weak bisimulation} as a special case of an
\emph{Aczel-Mendler bisimulation}~\cite{DBLP:conf/ctcs/AczelM89,
  DBLP:journals/corr/abs-1101-4223}.

\begin{definition}
  \label{def:am-bisim}
  Let $f : X \to FX$ be a coalgebra for a functor $F : \mathbb{C} \to
  \mathbb{C}$. An \emph{Aczel-Mendler bisimulation}, or simply
  \emph{bisimulation}, for $f$ is a relation (span) $X \xleftarrow{r_1} R
  \xrightarrow{r_2} X$ which is the carrier of a coalgebra $e : R \to FR$ that
  lifts to a span of coalgebra homomorphisms, i.e. making the following diagram
  commute.
	\begin{center}
		\begin{tikzcd}
      FX
			& FR
      \arrow{l}[swap]{Fr_1}
      \arrow{r}{Fr_2}
			& FX
			\\
      X \arrow{u}{f}
			& R
      \arrow{u}{e}
      \arrow{l}[swap]{r_1}
      \arrow{r}{r_2}
			& X
      \arrow{u}{f}
		\end{tikzcd}
	\end{center}
\end{definition}

In our setting we elect to use the ``unoptimized'' definition of weak
bisimulation as bisimulation on a saturated system, mainly due to its simplicity.
Regardless, under the mild condition that \emph{arbitrary cotupling in $\kl{T}$
  is monotonic}, which is true in all of our examples, this definition of weak
bisimulation coincides with the traditional one~\cite[\S 6]{DBLP:journals/corr/Brengos13}.

\begin{definition}
  \label{def:weak}
  Assume a coalgebra $h : X \to TX$ for an $\cpov$-enriched monad $\langle T,
  \eta, \mu \rangle$. A \emph{weak bisimulation} for $h$ is a bisimulation for $h^{*}$.
\end{definition}

Two elements $x,y \in X$ are \emph{(weakly) bisimilar}, written as ($x
\approx y$) $x \sim y$, if there is a (weak) bisimulation that contains them. If $z : Z \to
TZ$ is the final $T$-coalgebra, then for every coalgebra $h : X \to TX$ we write
$h^{!} : X \to Z$ for the unique coalgebra homomorphism from $h$ to $z$ and
$h^{\dag}$ for the one from $h^{*}$ to $z$. The following
theorem presents the principle of weak coinduction, i.e. that weakly bisimilar
elements are mapped to the same weak behavior.

\begin{theorem}[{\cite[Theorem 6.8]{DBLP:journals/corr/Brengos13}}]
  \label{th:weakcoind}
  Let $\langle T, \eta, \mu \rangle$ be an $\cpov$-enriched monad with final
  coalgebra $z : Z \to TZ$. If $T$ preserves weak
  pullbacks, then the greatest weak bisimulation for a coalgebra $h : X \to TX$
  exists and coincides with the pullback of the equality span $\langle
  \id_{Z},\id_{Z} \rangle : Z \rightarrowtail Z \times Z$ along $h^{\dag} \times
  h^{\dag}: X \times X \to Z \times Z$.
\end{theorem}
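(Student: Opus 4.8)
The plan is to reduce the statement to a known coalgebraic fact applied to the saturated coalgebra $h^{*}$, rather than to $h$ directly. First I would recall that, by \cref{satprop}, $h^{*}$ is itself a $T$-coalgebra on $X$ (it is an endomorphism $X \to X$ in $\kl{T}$, hence a map $X \to TX$ in $\mathbb{C}$), and by \cref{def:weak} a weak bisimulation for $h$ is by definition an Aczel--Mendler bisimulation for $h^{*}$. So the greatest weak bisimulation for $h$ is exactly the greatest (ordinary) bisimulation for the coalgebra $h^{*}$. The task therefore becomes: show that for a weak-pullback-preserving $T$ with final coalgebra $z$, the greatest bisimulation of any coalgebra $k : X \to TX$ exists and is the pullback of the diagonal $\langle \id_Z, \id_Z\rangle$ along $k^{!} \times k^{!}$, then instantiate $k := h^{*}$ and note $h^{\dag} = (h^{*})^{!}$ by the notational convention introduced just before the theorem.

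The key steps for that general fact are the standard ones. (1) Since $T$ preserves weak pullbacks, the class of Aczel--Mendler bisimulations for a fixed coalgebra is closed under unions (one builds a coalgebra structure on the union of the carriers using weak-pullback preservation to split the required maps), so a greatest bisimulation $\sim_{k}$ exists. (2) Kernel pairs of coalgebra homomorphisms are bisimulations: if $p : X \to Y$ is a homomorphism, its kernel pair $X \times_Y X$ carries a coalgebra structure making both projections homomorphisms — again this uses weak-pullback preservation of $T$ so that $T$ applied to the kernel-pair span is still (weakly) a pullback and the mediating map exists. Applying this to the unique homomorphism $k^{!} : X \to Z$ into the final coalgebra shows the pullback $P$ of $\langle \id_Z,\id_Z\rangle$ along $k^{!}\times k^{!}$ — i.e. the kernel pair of $k^{!}$ — is a bisimulation, hence $P \subseteq {\sim_k}$. (3) Conversely, any bisimulation $R \rightrightarrows X$ lifts to a span of homomorphisms into $z$, and by finality both legs postcomposed with $k^{!}$ agree, i.e. $k^{!} \circ r_1 = k^{!}\circ r_2$; this exhibits $R$ as factoring through the kernel pair $P$, so ${\sim_k} \subseteq P$. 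Together (2) and (3) give ${\sim_k} = P$.

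Finally I would specialize: take $k = h^{*}$. Step (1) gives the greatest weak bisimulation for $h$; steps (2)–(3) identify it with the pullback of $\langle \id_Z,\id_Z\rangle$ along $(h^{*})^{!}\times (h^{*})^{!}$, and since $(h^{*})^{!} = h^{\dag}$ by definition, this is exactly the claimed pullback along $h^{\dag}\times h^{\dag}$. The main obstacle is the infrastructure in step (1)/(2): making precise that weak-pullback preservation of $T$ suffices to equip unions of bisimulations and kernel pairs with coalgebra structures, and that these constructions behave well enough (as spans in $\mathbb{C}$, not merely up to iso) to conclude the greatest bisimulation is genuinely the kernel pair of the map into the final coalgebra. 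All of this is classical for Set-like categories and for weak-pullback-preserving functors; since the theorem is cited verbatim from Brengos, I would invoke that development and only spell out the identification $h^{\dag} = (h^{*})^{!}$ and the passage from ``bisimulation for $h^{*}$'' to ``weak bisimulation for $h$'', which are immediate from the definitions above.
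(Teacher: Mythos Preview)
The paper does not prove this theorem at all: it is stated with a citation to Brengos and used as a black box, so there is no ``paper's own proof'' to compare against. Your sketch is correct and is precisely the standard coalgebraic argument one expects (and essentially what Brengos does): reduce weak bisimulation for $h$ to ordinary bisimulation for $h^{*}$ via \cref{def:weak}, identify $h^{\dag}$ with $(h^{*})^{!}$, and then invoke the classical fact that for a weak-pullback-preserving functor the greatest Aczel--Mendler bisimulation on any coalgebra coincides with the kernel pair of the unique map into the final coalgebra. Your acknowledgement that the only work beyond citation is the two definitional identifications is exactly right.
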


The $\dag$ construction can be applied to any coalgebra, including the final
coalgebra $z : Z \to TZ$. The following lemma shows that $h^{\dag}$
can also be recovered via $z^{\dag}$ and $h^{!}$, a fact that will turn out to be useful
in \cref{sec:ctxsem}.

\begin{lemma}[{\cite[Lemma 6.9]{DBLP:journals/corr/Brengos13}}]
  \label{lem:weak}
  For \emph{any} $h : X \to TX$, we have $h^{\dag} = z^{\dag} \circ h^{!}$.
\end{lemma}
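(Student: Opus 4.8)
The plan is to combine the finality of the coalgebra $z$ with the fact that coalgebra homomorphisms lift through the rt-closure $(-)^{*}$.

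\textbf{Reduction via finality.} Since $h^{\dag}$ is by definition the unique $T$-coalgebra homomorphism from $(X, h^{*})$ to the final coalgebra $(Z, z)$, it suffices to exhibit $z^{\dag} \circ h^{!}$ as such a homomorphism. Now $z^{\dag}$ is, again by definition, a $T$-coalgebra homomorphism $(Z, z^{*}) \to (Z, z)$, and $T$-coalgebra homomorphisms compose; so it is enough to prove the intermediate claim that $h^{!}$ is a $T$-coalgebra homomorphism from $(X, h^{*})$ to $(Z, z^{*})$, i.e. that $z^{*} \circ h^{!} = Th^{!} \circ h^{*}$.

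\textbf{The intermediate claim, in $\kl{T}$.} I would prove this in the Kleisli category $\kl{T}$, where the rt-closures live. Writing $\overline{h^{!}} = \eta_{Z} \circ h^{!} : X \to Z$ for the Kleisli lift, a short computation with the monad laws shows that, for endomorphisms $c : X \to X$ and $d : Z \to Z$ in $\kl{T}$, the statement ``$h^{!}$ is a $T$-coalgebra homomorphism from $(X,c)$ to $(Z,d)$'' is equivalent to $d \diamond \overline{h^{!}} = \overline{h^{!}} \diamond c$. Hence the known fact that $h^{!}$ is a homomorphism from $h$ to $z$ reads $z \diamond \overline{h^{!}} = \overline{h^{!}} \diamond h$, and the intermediate claim becomes $z^{*} \diamond \overline{h^{!}} = \overline{h^{!}} \diamond h^{*}$. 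The inclusion $\overline{h^{!}} \diamond h^{*} \le z^{*} \diamond \overline{h^{!}}$ is immediate from \cref{satprop}(i) applied with $f = \overline{h^{!}}$, $\alpha = h$, $\beta = z$ (the hypothesis there being the ``$\le$'' half of $z \diamond \overline{h^{!}} = \overline{h^{!}} \diamond h$). For the reverse inclusion I would expand $z^{*} = \bigvee_{n < \omega}(\id_{Z} \vee z)^{n}$ via \cref{sat}, invoke continuity of $(-) \diamond \overline{h^{!}}$ to reduce to showing $(\id_{Z} \vee z)^{n} \diamond \overline{h^{!}} \le \overline{h^{!}} \diamond h^{*}$, and induct on $n$: the base case uses $\id_{X} \le h^{*}$ (as $h^{*}$ is a monad), and the inductive step uses distributivity of composition over binary joins, the equation $z \diamond \overline{h^{!}} = \overline{h^{!}} \diamond h$, monotonicity of composition together with $h \le h^{*}$, and the idempotence $h^{*} \diamond h^{*} = h^{*}$ of \cref{satprop}(iii).

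\textbf{Conclusion, and the expected obstacle.} Given the intermediate claim, $z^{\dag} \circ h^{!} : (X, h^{*}) \to (Z, z^{*}) \to (Z, z)$ is a $T$-coalgebra homomorphism into the final coalgebra, hence equals $h^{\dag}$ by uniqueness, which is the statement. The only genuinely computational point — and the step I expect to be the main obstacle — is the reverse inclusion $z^{*} \diamond \overline{h^{!}} \le \overline{h^{!}} \diamond h^{*}$ above: it is the one place where one must reason with the explicit $\omega$-chain approximating the rt-closure rather than merely quote its closure/universal properties, and the bookkeeping between $\mathbb{C}$ and $\kl{T}$, together with the careful handling of how composition distributes over joins, is where a slip is most likely. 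Everything else is unwinding definitions and appealing to finality.
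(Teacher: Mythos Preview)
The paper does not give its own proof of this lemma; it is quoted from Brengos without argument. Your strategy is the right one, and in fact the paper implicitly relies on your intermediate claim elsewhere: the proof of \cref{the:main} invokes ``rt-closing preserves $T$-coalgebra homomorphisms'' to pass from the first commuting square to the second. Reducing \cref{lem:weak} to that claim and then composing with $z^{\dag}$ and appealing to finality of $z$ is exactly how the lemma is meant to be used, and your write-up of that part is clean.

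One genuine subtlety in your sketch: the reverse inclusion $z^{*} \diamond \overline{h^{!}} \le \overline{h^{!}} \diamond h^{*}$, as you set it up, needs $(\id \vee z) \diamond \overline{h^{!}} = \overline{h^{!}} \vee (z \diamond \overline{h^{!}})$, i.e.\ that $(-) \diamond \overline{h^{!}}$ distributes over the binary join. The $\cpov$-enrichment axioms in the paper only give \emph{left}-distributivity $i \diamond (f \vee g) = (i \diamond f) \vee (i \diamond g)$ together with $\omega$-continuity on both sides; right-distributivity over finite joins is not assumed, and $\omega$-continuity alone does not supply it. In every concrete Kleisli category considered here this is harmless (joins are pointwise unions, and $\overline{h^{!}} = \eta \circ h^{!}$ is pure, so $(-) \diamond \overline{h^{!}}$ is just precomposition by $h^{!}$ in $\mathbb{C}$), but at the stated level of generality you should either justify this instance explicitly or rephrase the induction so that only left-distributivity is used. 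You were right to single this step out as the place where a slip is most likely.
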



\section{Weak bisimulation congruence semantics}
\label{sec:ctxsem}

The theory introduced in \cref{sec:ord} sets the stage for our three
compositionality criteria, which ensure that weak bisimilarity is a congruence.
In addition, we shall test the criteria on \while~and \spc~and establish a
correspondence with the simply WB cool rule format of van
Glabbeek (see \cite{DBLP:journals/tcs/Glabbeek11}, definition also
included in \cref{app:simply}),
both formally and through examples, as it also guarantees weak bisimilarity
being a congruence and comes with less overhead compared to the WB cool format.
Finally, we shall briefly touch on the work of
Bonchi et al.~\cite{DBLP:conf/concur/BonchiPPR15} and show that systems
satisfying the three criteria induce lax models.

\subsection{The three compositionality criteria}
\label{subsec:crits}

Simply put, the three criteria ensure that semantics interact with the
order structure of the behavior functor in a sensible way. They are
\emph{abstract} in that they apply to any GSOS law $\lambda : \Sigma^{*} (\Id
\times T) \nat T\Sigma^{*}$ when $\langle T, \eta, \mu \rangle$ is an
$\cpov$-enriched monad. As such, we assume the existence of such a $T$ and
$\lambda$ throughout \cref{sec:ctxsem}. We present and explore each criterion
individually starting with the simplest of the three.

\begin{criterion}[Continuity]
  \label{crit1}
  For any ascending $\omega$-chain $f_{0}
  \leq f_{1} \leq\ldots : X \to TX$ the following condition applies:
  \begin{equation}
    \label{eq:crit1}
    \lambda \circ \Sigma^{*} \langle \id, \bigvee_{i} f_{i}\rangle = \bigvee_{i}
    \lambda \circ \Sigma^{*} \langle \id,  f_{i}\rangle
  \end{equation}
  Alternatively, we can write the above using $\rho : \Sigma (\Id \times T) \nat
  T\Sigma^{*}$ as
  \begin{equation}
    \label{eq:crit1a}
    \rho \circ \Sigma \langle \id, \bigvee_{i} f_{i}\rangle = \bigvee_{i}
    \rho \circ \Sigma \langle \id,  f_{i}\rangle.
  \end{equation}
\end{criterion}

\begin{proposition}
  \label{prop:cont}
  \cref{eq:crit1} and \cref{eq:crit1a} are equivalent.
\end{proposition}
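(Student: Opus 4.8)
The plan is to use Proposition~\ref{prop:gsos}, which establishes the bijective correspondence between a GSOS law $\rho : \Sigma(\Id \times B) \nat B\Sigma^*$ and its extension $\lambda : \Sigma^*(\Id \times B) \nat B\Sigma^*$. The two conditions~\eqref{eq:crit1} and~\eqref{eq:crit1a} should follow from one another by transporting along this correspondence. The key observation is that $\lambda$ is recovered from $\rho$ by an inductive construction over the layers of $\Sigma^*$ (roughly, $\lambda$ acts as $B\eta \circ \pi_2$ on the "variable" layer and uses $\rho$ together with $\mu$ on each successive application of $\Sigma$), and conversely $\rho$ is obtained from $\lambda$ by precomposing with the embedding $\Sigma \hookrightarrow \Sigma^*$.

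First I would prove the easy direction, \eqref{eq:crit1} $\implies$ \eqref{eq:crit1a}: precompose the naturality square defining $\rho$ from $\lambda$ with the single-layer inclusion $\iota_\Sigma : \Sigma \nat \Sigma^*$. Since $\Sigma^* \langle \id, \bigvee_i f_i \rangle$ restricted along this inclusion is $\Sigma\langle \id, \bigvee_i f_i \rangle$ followed by $\iota$, and since $\iota$ is a fixed natural transformation independent of the $f_i$, it passes through the join freely (composition with a fixed morphism commutes with suprema of $\omega$-chains in a $\cpov$-enriched category, by continuity of composition). So $\rho \circ \Sigma\langle \id, \bigvee_i f_i \rangle = \lambda \circ \iota_{(\Id\times B)} \circ \Sigma\langle\id,\bigvee_i f_i\rangle = \lambda \circ \Sigma^*\langle\id,\bigvee_i f_i\rangle \circ \iota' = \bigvee_i \lambda \circ \Sigma^*\langle\id,f_i\rangle\circ\iota' = \bigvee_i \rho\circ\Sigma\langle\id,f_i\rangle$.

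For the converse, \eqref{eq:crit1a} $\implies$ \eqref{eq:crit1}, I would argue by induction on the structure of $\Sigma^*$, using the presentation of $\lambda$ as built layer-by-layer from $\rho$, $\eta$, $\mu$, and $\Sigma^*$-functoriality. The base case (the variable layer) gives $\lambda \circ \eta_{(\Id\times B)} = B\eta\circ\pi_2$, which does not mention the $f_i$ nontrivially beyond the fixed projection, so continuity is immediate. The inductive step expresses $\lambda$ on an $(n{+}1)$-layer term through one application of $\rho$ (to which we apply~\eqref{eq:crit1a}) composed with $\lambda$ on the $n$-layer subterms (inductive hypothesis) and the fixed structural maps $\mu$, $B\mu$, $\Sigma^*\langle\Sigma^*\pi_1,-\rangle$; continuity of composition in the $\cpov$-enriched category lets the join commute past all of these fixed maps.

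The main obstacle is making the "inductive layer-by-layer" argument precise: the excerpt only gives the abstract naturality conditions characterizing $\lambda$, not an explicit inductive formula, so I would need to either invoke the concrete construction of $\lambda$ from $\rho$ from Klin's or Turi--Plotkin's work (cited as Proposition~\ref{prop:gsos}'s source) or, more cleanly, phrase the induction via the initial-algebra structure of $\Sigma^*$ and show that the predicate "$\lambda$ restricted to this layer is continuous in the $f_i$" is closed under the algebra operations. A subtlety to watch is that $\Sigma^*$ itself must preserve the relevant $\omega$-chains $\Sigma^*\langle\id,f_i\rangle$ for the suprema to behave well; this is where one uses that $\Sigma$ (hence $\Sigma^*$) and $\langle\id,-\rangle$ interact suitably with the enrichment, which should hold in the concrete $\Set$-based and Kleisli settings under consideration, though I would flag whether the paper expects this as a standing hypothesis.
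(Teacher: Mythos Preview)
Your proposal is correct, and the easy direction matches the paper exactly (precompose with $\theta : \Sigma \nat \Sigma^*$). For the hard direction your layer-by-layer induction on $\Sigma^*$ would work, but the paper takes a slicker route that you actually gesture at in your final paragraph: rather than carrying an inductive predicate, it invokes the ``structural recursion with accumulators'' characterization from \cite[Theorem~5.1]{DBLP:conf/lics/TuriP97}, namely that $\overline{\Sigma}f_i \triangleq \lambda \circ \Sigma^*\langle\id,f_i\rangle$ is the \emph{unique} morphism making a certain initial-algebra diagram commute (the homomorphic extension of $T\mu \circ \rho_{\Sigma^*}$ along $T\eta \circ f_i$). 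One then checks, using~\eqref{eq:crit1a} applied to the chain $\overline{\Sigma}f_i$ together with continuity of composition in $\kl{T}$, that $\bigvee_i \overline{\Sigma}f_i$ also makes the diagram for $\bigvee_i f_i$ commute, and concludes $\bigvee_i \overline{\Sigma}f_i = \overline{\Sigma}(\bigvee_i f_i)$ by uniqueness. This replaces your inductive step with a single uniqueness appeal and sidesteps the layer bookkeeping you flagged as the main obstacle. Your worry about $\Sigma^*$ interacting with the enrichment turns out to be unnecessary: the only $\cpov$-facts used are continuity of composition with the fixed structural maps $T\eta$, $T\mu$, $\eta$, and $\mu\circ\theta_{\Sigma^*}$, so no extra hypothesis on $\Sigma$ is required.
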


We can compare \cref{crit1} to the \emph{local continuity} property of
lifted functors in Kleisli
categories~\cite[\S2.3]{DBLP:journals/lmcs/HasuoJS07}, i.e. lifted functors
respecting the $\cpov$-enrichment structure of the category. When it comes to
semantics, the following example from van Glabbeek~\cite[Example
2]{DBLP:journals/tcs/Glabbeek11} underlines what sort of rules may
violate \cref{crit1}.

\begin{example}[Illegal rules]
  \label{ex:illegal}
  Let us extend \spc~with a new unary operator, $\textcolor{MidnightBlue}{\lfloor}
  \langle p \rangle \textcolor{MidnightBlue}{\rfloor}$,
  subject to the following GSOS rules applying to specific actions $\alpha, b \in \Delta$:
  \begin{center}
$    \inference[\textcolor{MidnightBlue}{pos}]{\goesv{P}{P\pr}{\tau}}{\goesv{\textcolor{MidnightBlue}{\lfloor} P \textcolor{MidnightBlue}{\rfloor}}{\textcolor{MidnightBlue}{\lfloor} P\pr \textcolor{MidnightBlue}{\rfloor}}{\tau}} \quad
      \inference[\textcolor{MidnightBlue}{neg}]{\ngoesv{P}{a}}{\goesv{\textcolor{MidnightBlue}{\lfloor}
          P \textcolor{MidnightBlue}{\rfloor}}{\mathtt{0}}{b}}$
  \end{center}

  Rule \textcolor{MidnightBlue}{neg} allows a process that cannot perform an
  $a$-transition to terminate with a $b$-transition. This rule
  violates \cref{crit1}, which can be witnessed by testing the criterion with
  $(f_{0}(p) = \{(\tau, p\pr)\}) \leq (f_{1}(p) = \{(a, q), (\tau, p\pr)\}),$ as
  only $f_{0}$ is able to induce a $b$-transition. In addition, weak bisimulation
  fails to be a congruence as $\tau.\alpha.\mathtt{0} \approx \alpha.\mathtt{0}$
  but $\textcolor{MidnightBlue}{\lfloor}
  \tau.\alpha.\mathtt{0} \textcolor{MidnightBlue}{\rfloor} \not\approx
  \textcolor{MidnightBlue}{\lfloor} \alpha.\mathtt{0}
  \textcolor{MidnightBlue}{\rfloor}$.
\end{example}

Rule \textcolor{MidnightBlue}{neg} in the above example includes a negative
premise. A GSOS specification that has no negative premises - conclusions are never
negative - is called \emph{positive}. Positive GSOS specifications correspond to
\emph{monotone} GSOS laws (see~\cite{fiorepositive} and also \cite[Equation
7]{DBLP:conf/concur/BonchiPPR15}), in the sense that $g \leq f \implies \lambda \circ \Sigma^{*}
\langle \id {,} g \rangle \leq \lambda \circ \Sigma^{*} \langle \id {,} f
\rangle$. This is weaker than continuity (\Cref{crit1}), in that continuous GSOS
laws are monotone but not the other way around, but one has to look
very hard to find semantics that are monotone yet not continuous.

\begin{example}[A non-continuous monotone rule]
  We substitute the set of visible actions of~\spc~with $\mathbb{N}$ and
  define a new operator, $\textcolor{Plum}{\lfloor} \_
  \textcolor{Plum}{\rfloor}$, subject to rule $    \inference[\textcolor{Plum}{mon}]{\goesv{P}{}{\infty}}{\goesv{\textcolor{Plum}{\lfloor} P \textcolor{Plum}{\rfloor}}{\mathtt{0}}{\tau}}$,
  where $\goesv{P}{}{\infty}$ denotes that $P$ can perform an
  infinite number of transitions, i.e. set $\{(\delta,
  P\pr)~|~\goesv{P}{P\pr}{\delta}\}$ is infinite. Even though this is a monotone
  rule, it is not
  continuous. Consider for instance the ascending $\omega$-chain $f_{i} : X \to TX$ for some
  set of processes $X$ with $f_{i}(x) = \{(0,x),(1,x)\ldots(i,x)\}$. Notice that
  $(\tau, x) \subseteq \lambda \circ \Sigma^{*} \langle \id, \bigvee_{i}
  f_{i}\rangle$, but $(\tau, x) \not\subseteq \bigvee_{i} \lambda \circ \Sigma^{*}
  \langle \id,  f_{i}\rangle$.
\end{example}

\begin{example}[Continuation of \cref{ex:while}]
  Back to our \while~language, \Cref{crit1} is trivially true for all
  terms except for sequential composition, which is slightly more involved.
  In this case, we can see that $\rho_{X}((x,\bigvee_{i}
  f_{i}(x))~\mathtt{;}~(y,\bigvee_{i} f_{i}(y)))$ is mapped to
  \begin{multline*}
    \lambda s.(\{(s\pr , y)~|~(s\pr , \checkmark) \in (\bigvee_{i}
    f_{i}(x))(s)\}~\cup~\{(s\pr, (x\pr~\mathtt{;}~y))~|~(s\pr , x\pr) \in
    (\bigvee_{i} f_{i}(y))(s)\}) \\
    = \lambda s.\bigvee_{i}(\{(s\pr , y)~|~(s\pr , \checkmark) \in
    f_{i}(x)(s)\}~\cup~\{(s\pr, (x\pr~\mathtt{;}~y))~|~(s\pr , x\pr) \in
    f_{i}(y)(s)\}),
  \end{multline*}
  which is precisely $\bigvee_{i}(\rho_{X}((x, f_{i}(x))~\mathtt{;}~(y, f_{i}(y))))$.
\end{example}

\begin{example}[Continuation of \cref{ex:proc}]
  The transitions of prefix expressions such as $\delta P$ are, for any given
  $P$, independent of (the transitions of) $P$ thus the prefix rule is trivially
  continuous. Transitions for parallel composition and non-deterministic choice
  are basically unions of transitions of their subterms and hence satisfy \cref{crit1}.
\end{example}

\begin{criterion}[Unitality]
  \label{crit2}
  For any $f : X \to TX$,
  \begin{center}
    \begin{tikzcd}
      \Sigma^{*} X
      \arrow[d, "\Sigma^{*} \langle \id {,} \eta_{X} \vee f \rangle"']
      \arrow[dr, bend left, "(\lambda_{X} \circ \Sigma^{*} \langle \id {,} f \rangle)^{*}"]
      \arrow[dr, phantom, "\leq"]
      \\
      \Sigma^{*} (X \product TX)
      \arrow[r, "\lambda_{X}"']
      & T\Sigma^{*} X
    \end{tikzcd}
  \end{center}
\end{criterion}

This criterion characterizes how the semantics deal with internal steps, here
represented by the monadic unit $\eta$. The right path on the diagram represents
the rt-closed (weak) transitions of a composite term, the subterms of which
(strongly) transition according to $f$. On the other hand, the left path
represents the strong transitions of a composite term, the subterms of which
may also perform internal steps. This criterion, which somewhat resembles
the identity condition for lifting functors to Kleisli categories~\cite[\S
2.2]{DBLP:journals/lmcs/HasuoJS07}, dictates that adding arbitrary internal
steps to subterms should not lead to extraneous, meaningful observations.

\begin{remark}
  \label{crit2str}
  A slightly stronger but simpler formulation of \Cref{crit2} based on $\rho$ is
  \begin{center}
    \begin{tikzcd}
      \Sigma X
      \arrow[d, "\Sigma \langle \id {,} \eta_{X} \vee f \rangle"']
      \arrow[dr, bend left, "(\rho_{X} \circ \Sigma \langle \id {,} f \rangle)
      \vee (\eta_{X} \circ \theta_{X})"]
      \arrow[dr, phantom, "\leq"]
      \\
      \Sigma (X \product TX)
      \arrow[r, "\rho_{X}"']
      & T\Sigma^{*} X
    \end{tikzcd}
  \end{center}
  where $\theta : \Sigma \nat \Sigma^{*}$ is the universal natural transformation
  sending $\Sigma$ to its free monad. Compared to the original criterion, which
  asks for the transitions induced by internal steps to \emph{eventually} appear on
  the right side, this version asks for said transition to appear either on
  \emph{step 0}, the \emph{reflexive}, identity step (hence the added $\eta_{X} \circ
  \theta_{X}$), or \emph{step 1}, i.e. the transitions induced immediately by $f$. We
  can apply similar logic to \cref{prop:cont} to show that this version of
  \cref{crit2} is stronger.
\end{remark}

\Cref{crit2} works in the same way as the ``patience rule'' requirement of the
simply WB cool rule format~\cite[Definition 8, item
2]{DBLP:journals/tcs/Glabbeek11}, which dictates that the only rules with
$\tau$-premises are patience rules. For instance, the patience rule for a unary
operator $o(p)$ is
$\inference{\goesv{p}{p\pr}{\tau}}{\goesv{o(p)}{o(p\pr)}{\tau}}$. It is clear
that patience rules are achieving the same effect of forcing composite terms
to \emph{only} relay silent steps of subterms.

\begin{example}[{\cite[Example 4]{DBLP:journals/tcs/Glabbeek11}}]
  \label{ex:glab4}
  We extend~\spc~with $\textcolor{DarkOrchid}{\lfloor} \_
  \textcolor{Plum}{\rfloor}$ and introduce the following impatient rules:
  \begin{center}
    $\inference[\textcolor{Plum}{pat}]{\goesv{P}{P\pr}{\tau}}{\goesv{\textcolor{Plum}{\lfloor} P \textcolor{Plum}{\rfloor}}{\textcolor{Plum}{\lfloor} P\pr \textcolor{Plum}{\rfloor}}{\tau}} \quad
      \inference[\textcolor{Plum}{imp}]{\goesv{P}{P\pr}{\tau}}{\goesv{\textcolor{Plum}{\lfloor} P \textcolor{Plum}{\rfloor}}{\textcolor{Plum}{\lfloor} P\pr \textcolor{Plum}{\rfloor}}{c}}$
  \end{center}
  Rule \textcolor{Plum}{imp} violates \cref{crit2}, as taking  $f = \lambda
  x.\varnothing$ will not induce the $c$-transition present in the left path.
  Weak bisimulation fails to be a congruence as $\texttt{0} \approx \tau.\texttt{0}$ but
  $\textcolor{Plum}{\lfloor} \texttt{0}
  \textcolor{Plum}{\rfloor} \not\approx
  \textcolor{Plum}{\lfloor} \tau.\texttt{0}
  \textcolor{Plum}{\rfloor}$.
\end{example}

\begin{example}[Continuation of \cref{ex:while}]
  Language~\while~actually respects the stronger version of \cref{crit2} found
  in \cref{crit2str}. For \texttt{skip}, assignment and \texttt{while}-loops the
  transitions induced by $\rho_{X} \circ \Sigma \langle \id {,} f \rangle$ are
  the same regardless of $f$. This is not the case for sequential composition,
  but we observe that the left path always leads to $\goes{s, x \texttt{;} y}{s , x
    \texttt{;} y}$, which is covered by the added $\eta_{X} \circ \theta_{X}$ on
  the right path.
\end{example}

\begin{example}[Continuation of \cref{ex:proc}]
  \label{lax:nonex}
  \spc~provides for a good example of failure of \cref{crit2} as it echoes the
  well-known fact that weak bisimilarity is not compatible with
  non-deterministic choice in a manner similar to \cref{ex:glab4}. In particular,
  the left path always assigns $x + y$ transitions $\{(\tau, x), (\tau, y)\}$ but for $f = \lambda x.\varnothing$
  we have $(\lambda_{X} \circ \Sigma^{*} \langle \id , f \rangle)^{*} = \{(\tau, x +
  y)\}$. Clearly $\{(\tau, x), (\tau, y)\} \nsubseteq \{(\tau, x + y)\}$ and
  so \cref{crit2} is not satisfied. We can witness the incompatibility of
  non-deterministic choice by taking a cue from the failing instance and use a
  process which has no transitions, i.e. the null process: $\texttt{0} \approx
  \tau.\texttt{0}$ but $\delta.\texttt{0} + \texttt{0} \not\approx
  \delta.\texttt{0} + \tau.\texttt{0}$.
\end{example}

\begin{criterion}[Observability]
  \label{crit3}
  For any $f : X \to TX$,
  \begin{equation}
    \label{observ1}
    \lambda_{X} \circ \Sigma^{*} \langle \id {,} f \diamond f \rangle \leq (\lambda_{X} \circ \Sigma^{*} \langle \id {,} f \rangle)^{*}
  \end{equation}
  Equivalently, we can reformulate the above as
  \begin{equation}
    \label{observ2}
    \rho_{X} \circ \Sigma \langle \id {,} f \diamond f \rangle \leq (\lambda_{X} \circ \Sigma^{*} \langle \id {,} f \rangle)^{*} \circ \theta_{X}
  \end{equation}
  where $\theta : \Sigma \nat \Sigma^{*}$ is the universal natural transformation
  sending $\Sigma$ to its free monad.
\end{criterion}

\begin{remark}
  The fact that \cref{observ1} and \cref{observ2} are equivalent can be proved
  in a manner similar to \cref{prop:cont}.
\end{remark}

This criterion is roughly a weakening of the associativity condition for
liftings to Kleisli categories~\cite[\S 2.2]{DBLP:journals/lmcs/HasuoJS07}. We
can think of $f \diamond f$ as a two-step transition applied to subterms and
$\lambda_{X} \circ \Sigma^{*} \langle \id {,} f \diamond f \rangle$
as the act of a context inspecting zero or more subterms performing that two-step
transition. The criterion relates the information obtained by contexts when
inspecting two-step transitions as opposed to inspecting each step individually,
possibly many times over. Specifically, it ensures that the former always
carries \emph{less} information than the latter. A further way to interpret this
criterion is that inspecting an effect \emph{now} instead of \emph{later} does
not produce new outcomes.

\Cref{crit3} is more complex to explain in terms of the simply WB cool format, as
requirements 1,3,4,5 in Definition 8 from van
Glabbeek~\cite{DBLP:journals/tcs/Glabbeek11} all contribute towards observations
on visible transitions not being affected by silent transitions, regardless of
when the latter occur. First, let us look at \emph{straightness}. An operator is
straight if it has no rules where a variable occurs multiple times in the
left-hand side of its premises. The following example shows how non-straight
rules can lead to issues.

\begin{example}[{\cite[Example 3]{DBLP:journals/tcs/Glabbeek11}}]
  \label{ex:glab3}
  Let operator $\textcolor{PineGreen}{\lfloor} \_
  \textcolor{PineGreen}{\rfloor}$ subject to the following rules applying to
  specific $a, b, c \in \Delta$:
  \begin{center}
    $\inference[\textcolor{PineGreen}{pat}]{\goesv{P}{P\pr}{\tau}}{\goesv{\textcolor{PineGreen}{\lfloor} P \textcolor{PineGreen}{\rfloor}}{\textcolor{PineGreen}{\lfloor} P\pr \textcolor{PineGreen}{\rfloor}}{\tau}} \quad
      \inference[\textcolor{PineGreen}{cur}]{\goesv{P}{Q}{a} & \goesv{P}{W}{b}}{\goesv{\textcolor{PineGreen}{\lfloor} P \textcolor{PineGreen}{\rfloor}}{\textcolor{PineGreen}{\lfloor} Q \textcolor{PineGreen}{\rfloor}}{c}}$
  \end{center}
  We can see how rule \textcolor{PineGreen}{cur} violates \cref{crit3} by taking $f(p) = \{(a, q),
  (\tau, w)\}$, $f(q) = \{(\tau, q)\}$ and $f(w) = \{(b, w)\}$. Since
  $(f \diamond f) (p) = \{(a, q),(b, w)\}$, running $\lambda_{X} \circ \Sigma^{*} \langle \id
  {,} f \diamond f \rangle$ on $\textcolor{PineGreen}{\lfloor} p
  \textcolor{PineGreen}{\rfloor}$ induces a $c$-transition, which does not occur
  on the right side.  With rule \textcolor{PineGreen}{cur} weak bisimilarity is not a congruence, as
  $\alpha.\mathtt{0} + b.\mathtt{0}  + \tau.b.\mathtt{0} \approx \alpha.\mathtt{0} +
  \tau.b.\mathtt{0}$
  but $\textcolor{PineGreen}{\lfloor} \alpha.\mathtt{0} + b.\mathtt{0} + \tau.b.\mathtt{0}
  \textcolor{PineGreen}{\rfloor} \not\approx \textcolor{PineGreen}{\lfloor}
  \alpha.\mathtt{0} + \tau.b.\mathtt{0} \textcolor{PineGreen}{\rfloor}$.
\end{example}

Requirements 3 and 4 in the definition of the simply WB cool format underline
how the \emph{lack} of patience rules can affect observations.

\begin{example}[{\cite[Example 5]{DBLP:journals/tcs/Glabbeek11}}]
  \label{ex:glab5}
  Consider operator $\textcolor{Violet}{\lfloor} \_
  \textcolor{Violet}{\rfloor}$ subject to rule
  $\inference[\textcolor{Violet}{oba}]{\goesv{P}{P\pr}{a}}{\goesv{\textcolor{Violet}{\lfloor}
      P \textcolor{Violet}{\rfloor}}{\mathtt{0}}{\tau}}$ applying to a specific
  action $a$. Rule \textcolor{Violet}{oba} fails \cref{crit3} with $f(p) = \{(\tau, q)\}$ and $f(q) = \{(a,
  w)\}$, making $(f \diamond f) (p) = \{(a, w)\}$. Running $\lambda_{X} \circ \Sigma^{*} \langle \id
  {,} f \diamond f \rangle$ on $\textcolor{Violet}{\lfloor} p
  \textcolor{Violet}{\rfloor}$ induces a $\tau$-transition, which does not occur
  on the right side. We can see how  $\alpha.\mathtt{0} \approx
  \tau.\alpha.\mathtt{0}$ but $\textcolor{Violet}{\lfloor} \alpha.\mathtt{0}
  \textcolor{Violet}{\rfloor} \not\approx \textcolor{Violet}{\lfloor}
  \tau.\alpha.\mathtt{0} \textcolor{Violet}{\rfloor}$.
\end{example}

The final requirement is that of \emph{smoothness}. A straight operator for an
LTS is smooth if it has no rules where a variable occurs both in the target and
in the left-hand side of a premise. Non-smooth rules can also cause problems, as
evidenced by the following example.

\begin{example}[{\cite[Example 7]{DBLP:journals/tcs/Glabbeek11}}]
  \label{ex:glab7}
  Let operator $\textcolor{Sepia}{\lfloor} \_ \textcolor{Sepia}{\rfloor}$ with
  the following rules:
  \begin{center}
    $\inference[\textcolor{RawSienna}{play}]{\goesv{P}{P\pr}{\delta}}{\goesv{\textcolor{RawSienna}{\lfloor} P \textcolor{RawSienna}{\rfloor}}{\textcolor{RawSienna}{\lfloor} P\pr \textcolor{RawSienna}{\rfloor}}{\delta}} \quad
    \inference[\textcolor{RawSienna}{pause}]{\goesv{P}{P\pr}{\delta}}{\goesv{\textcolor{RawSienna}{\lfloor} P \textcolor{RawSienna}{\rfloor}}{\textcolor{RawSienna}{\lfloor} P \textcolor{RawSienna}{\rfloor}}{\delta}}$
  \end{center}
  Non-smooth rule \textcolor{RawSienna}{pause} also violates \cref{crit3}. Take
  $f(p) = \{(\tau, q)\}$ and $f(q) = \{(a, w)\}$, making
  $(f \diamond f) (p) = \{(a, w)\}$. Running
  $\lambda_{X} \circ \Sigma^{*} \langle \id {,} f \diamond f \rangle$ on
  $\textcolor{RawSienna}{\lfloor} p \textcolor{RawSienna}{\rfloor}$ induces
  $a$-transition $(a, \textcolor{RawSienna}{\lfloor} p
  \textcolor{RawSienna}{\rfloor})$, but the only $a$-transitions induced on the
  other side are $(a, \textcolor{RawSienna}{\lfloor} q
  \textcolor{RawSienna}{\rfloor})$ and $(a, \textcolor{RawSienna}{\lfloor} w
  \textcolor{RawSienna}{\rfloor})$ instead. Weak bisimilarity fails to be a
  congruence, with $\alpha.\mathtt{0}
  + \tau.b.\mathtt{0} \approx \alpha.\mathtt{0} + \tau.b.\mathtt{0} + b.\mathtt{0}$
  but $\textcolor{PineGreen}{\lfloor} \alpha.\mathtt{0} + \tau.b.\mathtt{0}
  \textcolor{PineGreen}{\rfloor} \not\approx \textcolor{PineGreen}{\lfloor}
  \alpha.\mathtt{0} + \tau.b.\mathtt{0} + b.0 \textcolor{PineGreen}{\rfloor}$.
  The difference here is that only $\textcolor{PineGreen}{\lfloor}
  \alpha.\mathtt{0} + \tau.b.\mathtt{0} + b.0 \textcolor{PineGreen}{\rfloor}$
  is able to perform an $\alpha$-transition after a $b$-transition.
\end{example}

\begin{example}[Continuation of \cref{ex:while}]
  The situation for \cref{crit3} is less obvious for both \texttt{while}-loops
  and sequential composition, but still trivial for \texttt{skip} and assignment
  statements. Using the simpler $\rho$-based formulation of \cref{crit3}, we
  have that for \texttt{while}-loops, the induced transition is not affected
  by transitions of subterms, hence $\rho_{X} \circ \Sigma \langle \id
  {,} f \diamond f \rangle$ is always included in the first iteration of
  $(\lambda_{X} \circ \Sigma^{*} \langle \id {,} f \rangle)^{*} \circ \theta$.

  Showing that the criterion is satisfied by an expression $x \texttt{;} y$ for
  any $x , y \in X$ requires case analysis of $(f
  \diamond f) (x)$ only, as the rule ignores $y$. The two cases are:
  \begin{itemize}
  \item $(t, z) \in (f \diamond f) (x)(s)$. In this case $x$ did not terminate,
    but rather went through an intermediate transition $t\pr, z\pr$. According
    to rule seq2, the transition produced by $\rho_{X} \circ \Sigma
    \langle \id {,} f  \diamond f \rangle$ is $\goes{s, x \texttt{;} y}{{t, z
        \texttt{;} y}}$. Going over
    to $(\lambda_{X} \circ \Sigma^{*} \langle \id {,} f \rangle)^{*} \circ
    \theta$, the first iteration produces $\goes{s, x \texttt{;} y}{{t\pr, z\pr
        \texttt{;} y}}$, and in the second we get $\goes{t\pr, z\pr \texttt{;}
      y}{{t, z \texttt{;} y}}$, which is the same result.
  \item $(t, \checkmark) \in (f \diamond f) (x)(s)$. Here, $x$ terminated
    producing $t$ either immediately $((t, \checkmark) \in f
    (x)(s))$ or in two steps $((s\pr, x\pr) \in f
    (x)(s)$ and $(t, \checkmark) \in f
    (x\pr)(s\pr))$. In any case, the
    transition produced by $\rho_{X} \circ \Sigma \langle \id {,} f
    \diamond f \rangle$ is $\goes{s, x \texttt{;}
      y}{{t, y}}$. Depending on when
    x terminated, this transition will be 
    ``caught'' in either the first or the second iteration of $(\lambda_{X} \circ
    \Sigma^{*} \langle \id {,} f \rangle)^{*} \circ \theta$.
  \end{itemize}
\end{example}

\begin{example}[Continuation of \cref{ex:proc}]
  When instantiated to \spc, the criterion essentially asks if the act of
  ``forgetting'' invisible steps of subterms, as imposed by the rules of monadic
  composition in $\kl{T}$, gives rise to new transitions for a composite term.
  In most cases, this is evidently true; consider for instance the parallel composition of two
  terms $P \Vert Q$, for which
  $\bialg{P}{\tau}{P\pr}{\delta}{P^{\prime\prime}}$, i.e. $(\tau, P\pr) \in
  f(P)$ and $(\delta, P^{\prime\prime}) \in f(P\pr)$.
  Forgetting the invisible step gives $\goesv{P}{P^{\prime\prime}}{\delta}$
  ($(\delta, P^{\prime\prime}) \in (f \diamond f)(P)$), so by rule
  com1 we have $\goesv{P \Vert Q}{P^{\prime\prime} \Vert Q}{\delta}$ on the
  left-hand side. This transition will occur after two iterations on the right-hand
  side, as in $\goesv{P \Vert Q}{\goesv{P\pr \Vert Q}{P^{\prime\prime} \Vert
      Q}{\delta}}{\tau}$.

  Rule syn is especially interesting, as it showcases the full power of
  \cref{crit3}. There are four different cases where syn induces a transition
  $\goesv{P \Vert Q}{P^{\prime\prime} \Vert Q^{\prime\prime}}{\tau}$ on the left side, namely
  \begin{itemize}
  \item $\goesv{P}{\goesv{P\pr}{P^{\prime\prime}}{\alpha}}{\tau}$ and 
    $\goesv{Q}{\goesv{Q\pr}{Q^{\prime\prime}}{\overline{\alpha}}}{\tau}$
  \item $\goesv{P}{\goesv{P\pr}{P^{\prime\prime}}{\alpha}}{\tau}$ and 
    $\goesv{Q}{\goesv{Q\pr}{Q^{\prime\prime}}{\tau}}{\overline{\alpha}}$
  \item $\goesv{P}{\goesv{P\pr}{P^{\prime\prime}}{\tau}}{\alpha}$ and 
    $\goesv{Q}{\goesv{Q\pr}{Q^{\prime\prime}}{\tau}}{\overline{\alpha}}$
  \item $\goesv{P}{\goesv{P\pr}{P^{\prime\prime}}{\tau}}{\alpha}$ and 
    $\goesv{Q}{\goesv{Q\pr}{Q^{\prime\prime}}{\overline{\alpha}}}{\tau}$.
  \end{itemize}
  The third and fourth case work similarly to the first and second
  (resp.) and so we focus on the latter. Either way, the right side of
  \cref{crit3} needs three
  iterations (meaning $(\lambda_{X} \circ \Sigma^{*} \langle \id {,} f \rangle) \diamond
  (\lambda_{X} \circ \Sigma^{*} \langle \id {,} f \rangle) \diamond
  (\rho_{X} \circ \Sigma \langle \id {,} f \rangle)$) in order to produce
  transition $\goesv{P \Vert Q}{P^{\prime\prime} \Vert Q^{\prime\prime}}{\tau}$.
  In the first case, each iteration executes (in sequence) rules com1, com2 and syn leading to
  $\goesv{P \Vert Q}{\goesv{P\pr \Vert Q}{\goesv{P\pr \Vert
        Q\pr}{P^{\prime\prime} \Vert Q^{\prime\prime}}{\tau}}{\tau}}{\tau}$. In
  the second case the order changes with rules com1, syn and com2 inducing
  $\goesv{P \Vert Q}{\goesv{P\pr \Vert Q}{\goesv{P^{\prime\prime} \Vert
        Q\pr}{P^{\prime\prime} \Vert Q^{\prime\prime}}{\tau}}{\tau}}{\tau}$.
\end{example}

Up to this point, the connection of our criteria with the simply WB cool format
has remained informal. The following theorem turns this connection to a formal
correspondence.

\begin{theorem}
  \label{th:simply}
  Any language in the simply WB cool format satisfies the three compositionality
  criteria.
\end{theorem}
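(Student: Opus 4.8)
The plan is to verify the three criteria \emph{rule by rule}, after first reducing each of them to a statement about a single syntactic operator. A language in the simply WB cool format (see \cref{app:simply}) induces a GSOS law $\rho : \Sigma(\Id \times T) \nat T\Sigma^{*}$ over the monad $T \triangleq \mathcal{P}_{c}(\Delta_{\tau} \times \Id)$ of \cref{ex:proc}, and, restricted to any one operator $o$, this $\rho$ is just the disjoint union of the clauses contributed by $o$'s rules. Since joins in $\kl{T}$ are computed pointwise and since each criterion has the $\rho$-and-$\Sigma$ form supplied by \cref{prop:cont}, \cref{crit2str}, and \cref{observ2}, it suffices to establish each inequality for a single $o$ and in fact for each of its rules in isolation. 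We shall use the explicit description of Kleisli composition in $\kl{T}$, so that $(\delta, y) \in (f \diamond f)(p)$ holds exactly when there is an intermediate $z$ with $\goesv{p}{z}{\delta_1}$ and $\goesv{z}{y}{\delta_2}$ under $f$ and, by the multiplication $\mu$ of \cref{ex2}, at least one of $\delta_1, \delta_2$ equal to $\tau$.

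\emph{Continuity and unitality.} A simply WB cool specification is positive, so its GSOS law is monotone, which already yields the $\geq$ inclusion in \cref{eq:crit1a}; for the converse, a term of $\Sigma X$ is finite and each rule has finitely many premises, so any transition in $\rho_{X} \circ \Sigma\langle \id, \bigvee_{i} f_{i}\rangle$ is witnessed by finitely many premises $\goesv{x_{k}}{y_{k}}{\delta_{k}}$, each holding --- by positivity --- already in some $f_{i_{k}}$, hence all in $f_{N}$ for $N = \max_{k} i_{k}$, which puts the transition in $\rho_{X} \circ \Sigma\langle \id, f_{N}\rangle$. For \emph{unitality} we prove the stronger $\rho$-based form of \cref{crit2str}: passing from $f$ to $\eta_{X} \vee f$ only adds the silent self-loop $\goesv{p}{p}{\tau}$ to each subterm, and by item~2 of the format (the only rules with $\tau$-premises are patience rules) this loop fires only patience rules, yielding $\goesv{o(\bar{x})}{o(\bar{x})}{\tau}$, a summand of $\eta_{X} \circ \theta_{X}$; every non-patience rule has no $\tau$-premise, so its contribution is unaffected and already lies in $\rho_{X} \circ \Sigma\langle \id, f\rangle$. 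Hence $\rho_{X} \circ \Sigma\langle \id, \eta_{X} \vee f\rangle \leq (\rho_{X} \circ \Sigma\langle \id, f\rangle) \vee (\eta_{X} \circ \theta_{X})$.

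\emph{Observability.} This is the crux. Fix a rule $r$ for $o$ with premises $\goesv{x_{i}}{y_{i}}{\delta_{i}}$ --- by \emph{straightness} each source variable occurs at most once among them --- and conclusion $\goesv{o(\bar{x})}{t}{\delta}$, where $t$ is a term in the $x_{i}$ and the $y_{i}$. Suppose $r$, fed with $f \diamond f$, produces $\goesv{o(\bar{x})}{t\pr}{\delta}$; then every premise is witnessed by a two-step $f$-transition $\goesv{x_{i}}{z_{i}}{\delta_{i}^{1}}$, $\goesv{z_{i}}{y_{i}}{\delta_{i}^{2}}$ with one of the two labels silent. We reassemble $\goesv{o(\bar{x})}{t\pr}{\delta}$ as a finite chain of $\lambda_{X} \circ \Sigma^{*}\langle \id, f\rangle$-steps issued from $\theta_{X}(o(\bar{x}))$: for each argument $i$ whose witness is ``$\tau$ then visible'' we first apply the patience rule for $i$ --- which exists because $r$ inspects $x_{i}$, this being exactly what items~3--4 guarantee --- to move $o(\ldots, x_{i}, \ldots)$ to $o(\ldots, z_{i}, \ldots)$, and straightness makes these absorptions touch pairwise-distinct arguments so that they compose in any order; then we fire $r$ itself. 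For each argument whose witness is ``visible then $\tau$'' we instead fire $r$ first, landing in $t$ with $z_{i}$ in place of $y_{i}$, and then apply a patience rule at the occurrence of that variable inside the target, which is precisely what \emph{smoothness} makes both legitimate and available. Concatenating the steps and precomposing with $\theta_{X}$ exhibits $\goesv{o(\bar{x})}{t\pr}{\delta}$ inside $(\lambda_{X} \circ \Sigma^{*}\langle \id, f\rangle)^{*} \circ \theta_{X}$, which is \cref{observ2}.

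The delicate part, and the one where most of the work lies, is this last step: one has to pair each remaining clause of van Glabbeek's definition (straightness, smoothness, and the existence of patience rules for the arguments a rule inspects) with the precise move of the simulation it licenses, and check that when several premises are simultaneously witnessed by two-step transitions the corresponding patience-rule absorptions serialise without interference --- which is where straightness does the real work. A secondary, more presentational subtlety is that the simply WB cool format is phrased for labelled transition systems whereas our criteria are stated for an arbitrary $\cpov$-enriched monad; the bridge is that the format targets exactly $T = \mathcal{P}_{c}(\Delta_{\tau} \times \Id)$, whose $\mu$ is what turns ``a two-step transition is a one-step transition with a silent step inserted'' into the usable equation used above.
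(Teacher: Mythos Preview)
Your overall scheme matches the paper's: both reduce to a per-rule check and verify observability by sandwiching the main rule firing between patience-rule steps. The continuity argument via finiteness of premises is a little cleaner than the paper's explicit per-shape verification, and the unitality argument coincides with the paper's use of \cref{crit2str}.

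There is, however, a real misattribution in the observability case that leaves one branch unjustified. You invoke items~3--4 for the ``$\tau$ then visible'' witnesses and smoothness for the ``visible then $\tau$'' witnesses; it is the other way round. In the $\tau$-first case $\goesv{x_{i}}{\goesv{z_{i}}{y_{i}}{c_{i}}}{\tau}$ you correctly apply the patience rule for argument~$i$ of $o$ (item~3 alone suffices), sending $o(\overrightarrow{x})$ to $o(\overrightarrow{x})[z_{i}/x_{i}]$, and then fire $r$. But firing $r$ on the substituted term yields the target $t[z_{i}/x_{i}]$, which equals the desired $t$ only because $x_{i}$ does not occur in $t$ --- and that is precisely smoothness (item~5). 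Without it you reproduce \cref{ex:glab7}. Your text does not address why the target is unaffected by the preliminary substitution, so the $\tau$-first branch has a gap. Conversely, in the visible-first case $\goesv{x_{i}}{\goesv{z_{i}}{y_{i}}{\tau}}{c_{i}}$ you fire $r$, landing in $t[z_{i}/y_{i}]$, and then need $\tau$-steps inside $t$ to move $z_{i}$ to $y_{i}$. Those patience rules live at the \emph{receiving} positions of $y_{i}$ in $t$, and their existence is item~4, not smoothness; this is exactly the paper's case~4 in the one-active-argument analysis. A secondary point the paper handles but you elide: if $y_{i}$ occurs several times in $t$, one application of $\lambda_{X} \circ \Sigma^{*}\langle \id, f\rangle$ propagates the $\tau$-step through only one occurrence (a patience rule touches one argument at a time), so the post-fire phase may need as many iterations as occurrences; ``apply a patience rule at the occurrence'' understates this.
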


The converse is not true, as for example any simple non-smooth rule satisfying
the three criteria, such as $\goesv{[x]}{x}{c}$, is not simply WB cool. A proof
of \cref{th:simply} is provided in~\Cref{subsec:wbproof}.

\subsection{An algebra for a $\dag$}

We are now ready to move on to the main theorem of our work, namely the
existence of a compatible algebra structure for morphism $h^{\dag} : A \to Z$
mapping every term in $A$ to its weak behavior in $Z$. First, we present the
following intermediate result that plays a catalytic role in the main theorem
and is noteworthy in its own right.

\begin{proposition}
  \label{prop:interm}
  Let $\lambda : \Sigma^{*} (\Id \times T)
  \nat T\Sigma^{*}$ be a GSOS law of $\Sigma$ over $T$ with $T$ being an
  $\cpov$-enriched monad. If $\lambda$ satisfies the \emph{three
    compositionality criteria}, then for any $T$-coalgebra $f : X \to TX$ we
  have $(\lambda \circ \Sigma^{*}\langle \id {,} f^{*} \rangle)^{*} = (\lambda
  \circ \Sigma^{*}\langle \id {,} f \rangle)^{*}$.
\end{proposition}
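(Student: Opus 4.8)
The plan is to prove the two inequalities separately. Throughout, write $\Lambda(g) := \lambda_{X} \circ \Sigma^{*}\langle \id {,} g \rangle : \Sigma^{*}X \to T\Sigma^{*}X$ for a $T$-coalgebra $g : X \to TX$; the claim is then $(\Lambda(f^{*}))^{*} = (\Lambda(f))^{*}$. I will use three elementary facts: (a) $\Lambda$ is monotone, which follows from \cref{crit1} by instantiating continuity at an ascending chain taking only two distinct values (this is the noted fact that continuous GSOS laws are monotone); (b) $(-)^{*}$ is monotone, since $\alpha \leq \beta$ implies $\alpha \leq \beta \leq \beta^{*}$ with $\beta^{*}$ a monad, so $\alpha^{*} \leq \beta^{*}$ by \cref{def:free}; (c) $(\Lambda(g))^{*}$ is a monad for every $g$, again by \cref{def:free} and \cref{sat}. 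The inequality $(\Lambda(f))^{*} \leq (\Lambda(f^{*}))^{*}$ is then immediate: $f \leq f^{*}$, so $\Lambda(f) \leq \Lambda(f^{*})$ by (a), hence $(\Lambda(f))^{*} \leq (\Lambda(f^{*}))^{*}$ by (b).

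For the converse $(\Lambda(f^{*}))^{*} \leq (\Lambda(f))^{*}$, by the universal property of the free monad $(\Lambda(f))^{*}$ it suffices to show $\Lambda(f^{*}) \leq (\Lambda(f))^{*}$. By \cref{sat}, $f^{*} = \bigvee_{n<\omega}(\id \vee f)^{n}$ is the supremum of the ascending $\omega$-chain $(\id \vee f)^{n}$ (since $\id \leq \id \vee f$ and composition in $\kl{T}$ is monotone), so \cref{crit1} gives $\Lambda(f^{*}) = \bigvee_{n<\omega}\Lambda((\id \vee f)^{n})$. It therefore suffices to prove $\Lambda((\id \vee f)^{n}) \leq (\Lambda(f))^{*}$ for every $n$, which I do by strong induction on $n$.

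For $n \in \{0,1\}$: as $(\id \vee f)^{1} = \eta_{X} \vee f$, \cref{crit2} gives $\Lambda((\id \vee f)^{1}) \leq (\Lambda(f))^{*}$ directly, and then $\Lambda((\id \vee f)^{0}) = \Lambda(\eta_{X}) \leq \Lambda(\eta_{X} \vee f) \leq (\Lambda(f))^{*}$ by monotonicity. For $n \geq 2$, write $n = 2m$ or $n = 2m+1$ with $m \geq 1$, and put $m' = m$ in the even case and $m' = m+1$ in the odd case, so that $1 \leq m' < n$ and $(\id \vee f)^{n} \leq (\id \vee f)^{2m'}$ (equality in the even case, one step up the chain in the odd case). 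Using $(\id \vee f)^{2m'} = (\id \vee f)^{m'} \diamond (\id \vee f)^{m'}$ (associativity of $\diamond$), \cref{crit3} applied to the coalgebra $(\id \vee f)^{m'} : X \to TX$ yields
\[
\Lambda\bigl((\id \vee f)^{2m'}\bigr) = \Lambda\bigl((\id \vee f)^{m'} \diamond (\id \vee f)^{m'}\bigr) \leq \bigl(\Lambda((\id \vee f)^{m'})\bigr)^{*}.
\]
The induction hypothesis gives $\Lambda((\id \vee f)^{m'}) \leq (\Lambda(f))^{*}$, so by monotonicity of $(-)^{*}$ and \cref{satprop} ($\alpha^{**} = \alpha^{*}$) we get $\bigl(\Lambda((\id \vee f)^{m'})\bigr)^{*} \leq \bigl((\Lambda(f))^{*}\bigr)^{*} = (\Lambda(f))^{*}$. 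Together with $\Lambda((\id \vee f)^{n}) \leq \Lambda((\id \vee f)^{2m'})$ (monotonicity of $\Lambda$), this closes the induction, giving $\Lambda(f^{*}) \leq (\Lambda(f))^{*}$ and hence the result.

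I expect the inductive step to be the main obstacle. The crux is that observability (\cref{crit3}) constrains $\Lambda$ only on self-composites $g \diamond g$, so one cannot peel a single copy of $\id \vee f$ off $(\id \vee f)^{n}$; one must instead route through the doubling identity $(\id \vee f)^{2m'} = (\id \vee f)^{m'} \diamond (\id \vee f)^{m'}$ and use monotonicity to dominate an odd power by the next even one. Everything else — monotonicity of $\Lambda$ and of $(-)^{*}$, the monad structure of $(\Lambda(f))^{*}$, and the bookkeeping with \cref{sat} and \cref{satprop} — is routine.
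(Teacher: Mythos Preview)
Your proof is correct and follows essentially the same route as the paper: both directions are handled identically, and the inductive core is the same ``route through a self-composite so that \cref{crit3} applies, then collapse with $(-)^{**} = (-)^{*}$'' argument. The only cosmetic difference is in the induction bookkeeping: the paper proves the step at $n+2$ by bounding $(\id \vee f)^{n+2} \leq (\id \vee f)^{n+1} \diamond (\id \vee f)^{n+1}$ and using the ordinary induction hypothesis at $n+1$, whereas you use strong induction and halve the exponent; both exploit exactly the same monotonicity-in-the-chain observation you call out in your final paragraph.
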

\begin{proof}
      By antisymmetry on the order structure $\leq$ of $T$. To avoid unnecessary
      clutter, for the rest of the proof we shall be writing $\lambda \circ
      \Sigma^{*}\langle \id,f \rangle$ as $\overline{\Sigma}f$ and the $\id$ in
      $\overline{\Sigma}(1 \vee f)$ will stand for $\eta$, the identity
      morphism in $\kl{T}$.
\begin{itemize}
\item Via~\Cref{crit1} and~\Cref{satprop} we have $f \leq f^{*} \implies
  \overline{\Sigma}f \leq
  \overline{\Sigma}f^{*} \implies (\overline{\Sigma}f)^{*} \leq
  (\overline{\Sigma}f^{*})^{*}$. In other words, $(\lambda \circ \Sigma^{*}\langle \id {,} f
  \rangle)^{*} \leq (\lambda \circ \Sigma^{*}\langle \id {,} f^{*} \rangle)^{*}$.
\item We first show that $\overline{\Sigma}(1 \vee f)^{n} \leq
  (\overline{\Sigma}f)^{*}$ for all $n \in \mathbb{N}$ by induction on $n$.
  For $n = 0$ and $n = 1$ and by \Cref{crit1} (as $\id \leq (\id
  \vee f)$) and \Cref{crit2},
  \begin{align}
    \label{ineq1}
    & \overline{\Sigma}(\id \vee f)^{0} = \overline{\Sigma}1 \leq \overline{\Sigma}(1 \vee f) = \overline{\Sigma}(1 \vee f)^{1} \leq (\overline{\Sigma}f)^{*}
  \end{align}
  We now have to show that $\overline{\Sigma}((\id \vee f)^{n+1} \diamond (\id
  \vee f)) \leq (\overline{\Sigma} f)^{*}$ for some $n$ by making use of the
  inductive hypothesis $\overline{\Sigma}(\id \vee f)^{n+1} \leq
  (\overline{\Sigma} f)^{*}$. To that end, we first note that $\id \leq (\id
  \vee f)$ and thus, due to $\cpov$-enrichment, we have that for all $n \in \mathbb{N}$,
  \begin{equation}
    \label{ineqN}
    \id \vee f \leq (\id \vee f) \diamond (\id \vee f) \leq \dots \leq (\id \vee f)^{n+1} \implies (\id \vee f)^{n+1} \diamond (\id \vee f) \leq (\id \vee f)^{n+1} \diamond (\id \vee f)^{n+1}
  \end{equation}
  Next, by \eqref{ineqN}, \Cref{crit1} and \Cref{crit3},
  \begin{equation}
    \label{ineq2}
    \overline{\Sigma}((\id \vee f)^{n+1} \diamond (\id \vee f)) \leq \overline{\Sigma}((\id \vee f)^{n+1} \diamond (\id \vee f)^{n+1}) \leq (\overline{\Sigma}(\id \vee f)^{n+1})^{*}
  \end{equation}
  The induction hypothesis gives $\overline{\Sigma}(\id \vee f)^{n+1}\leq
  (\overline{\Sigma}f)^{*}$ and so \eqref{ineq2} becomes
  \begin{equation}
    \label{proofeq1}
    \overline{\Sigma}((\id \vee f)^{n+1} \diamond (\id \vee f)) \leq (\overline{\Sigma}(\id \vee f)^{n+1})^{*} \leq (\overline{\Sigma}f)^{**} = (\overline{\Sigma}f)^{*}
  \end{equation}
  Inequalities \eqref{proofeq1} and \eqref{ineq1} complete the inductive proof that $\forall
  n \in \mathbb{N}.~\overline{\Sigma}(\id \vee f)^{n} \leq
  (\overline{\Sigma}f)^{*}$. Finally, by~\Cref{sat}
  and~\Cref{crit1}:
  \begin{align*}
    \overline{\Sigma} f^{*} =
    \overline{\Sigma} \bigvee_{n<\omega} (\id \vee f)^{n} =
    \bigvee_{n<\omega} \overline{\Sigma} (\id \vee f)^{n}
  \end{align*}
  We just proved that every link in the $\omega$-chain is smaller than
  $(\overline{\Sigma} f)^{*}$, and thus
  $\overline{\Sigma} f^{*} \leq (\overline{\Sigma} f)^{*}$. By~\Cref{def:free}, this
  becomes $(\overline{\Sigma} f^{*})^{*} \leq (\overline{\Sigma} f)^{*}$, i.e.
  $(\lambda \circ \Sigma^{*}\langle \id {,} f^{*} \rangle)^{*} \leq (\lambda \circ
  \Sigma^{*}\langle \id {,} f \rangle)^{*}$.
\end{itemize}
Having shown both directions, we end up with
$(\lambda \circ \Sigma^{*}\langle \id {,} f^{*}
\rangle)^{*} = (\lambda \circ \Sigma^{*}\langle \id {,} f \rangle)^{*}$.
  \end{proof}

In other words, the transition system of (arbitrarily deep)
contexts with subterms in $X$ is \emph{weakly} equivalent that of contexts
having access to all the weak transitions of
subterms in $X$. \Cref{prop:interm} is what enables the formation of a
compatible algebra structure for $h^{\dag} : A \to Z$ by applying
it to the final coalgebra $z : \goesv{Z}{BZ}{\cong}$. It is currently unclear if
the converse of \Cref{prop:interm} holds.

\begin{theorem}[Main theorem]
  \label{the:main}
  Let $\lambda : \Sigma^{*} (\Id \times T)
  \nat T\Sigma^{*}$ be a GSOS law of $\Sigma$ over $T$ with $T$ being an
  $\cpov$-enriched monad. If $\lambda$ satisfies the \emph{three
    compositionality criteria}, then for $\bialg{\Sigma A}{a}{A}{h}{TA}$ and
  $\bialg{\Sigma Z}{g}{Z}{z}{TZ}$ as in \cref{distr}, $h^{\dag}$ is a
  $\Sigma$-algebra homomorphism from $a : \luarrow{\Sigma A}{A}{\cong}{}$ to
  $z^{\dag} \circ g : \Sigma Z \to Z$.
\end{theorem}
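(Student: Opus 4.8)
The plan is to reduce the claim to \cref{prop:interm} instantiated at the final coalgebra $z \colon Z \to TZ$. Unfolding what it means for $h^{\dag}$ to be a $\Sigma$-algebra homomorphism from $a$ to $z^{\dag} \circ g$, we must show that $h^{\dag} \circ a = z^{\dag} \circ g \circ \Sigma h^{\dag}$. For the left-hand side, \cref{lem:weak} gives $h^{\dag} = z^{\dag} \circ h^{!}$, and by \cref{distr} the coalgebra homomorphism $h^{!} \colon A \to Z$ is also a $\Sigma$-algebra homomorphism from $a$ to $g$ (it is the unique $\lambda$-bialgebra morphism), so $h^{\dag} \circ a = z^{\dag} \circ h^{!} \circ a = z^{\dag} \circ g \circ \Sigma h^{!}$. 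For the right-hand side, functoriality of $\Sigma$ gives $\Sigma h^{\dag} = \Sigma(z^{\dag} \circ h^{!}) = \Sigma z^{\dag} \circ \Sigma h^{!}$, so it equals $z^{\dag} \circ g \circ \Sigma z^{\dag} \circ \Sigma h^{!}$. Hence it suffices to prove the identity $z^{\dag} \circ g = z^{\dag} \circ g \circ \Sigma z^{\dag}$, which no longer mentions $A$ or $h$.

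To prove this identity I would pass to the induced $\mathcal{EM}$-algebra $g^{\#} \colon \Sigma^{*} Z \to Z$: because $g^{\#} \circ \theta_{Z} = g$ and, by naturality of $\theta$, $\Sigma^{*} z^{\dag} \circ \theta_{Z} = \theta_{Z} \circ \Sigma z^{\dag}$, it is enough to show $z^{\dag} \circ g^{\#} = z^{\dag} \circ g^{\#} \circ \Sigma^{*} z^{\dag}$. Write $\hat{c} \triangleq \lambda_{Z} \circ \Sigma^{*}\langle \id, z \rangle$ and $\check{c} \triangleq \lambda_{Z} \circ \Sigma^{*}\langle \id, z^{*} \rangle$, both $T$-coalgebras on $\Sigma^{*} Z$. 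The bialgebra square of the denotational model $\bialg{\Sigma Z}{g}{Z}{z}{TZ}$ (\cref{def:bialg}) says exactly that $g^{\#}$ is a $T$-coalgebra homomorphism from $\hat{c}$ to $z$, so $g^{\#} = \hat{c}^{!}$ by finality and therefore $z^{\dag} \circ g^{\#} = \hat{c}^{\dag}$ by \cref{lem:weak}. A diagram chase combining that same bialgebra square, the naturality square of $\lambda$ at $z^{\dag}$, and the defining equation of $z^{\dag}$ (that $z^{\dag}$ is a $T$-coalgebra homomorphism from $z^{*}$ to $z$) shows that $g^{\#} \circ \Sigma^{*} z^{\dag}$ is a $T$-coalgebra homomorphism from $\check{c}$ to $z$; hence $g^{\#} \circ \Sigma^{*} z^{\dag} = \check{c}^{!}$ and, again by \cref{lem:weak}, $z^{\dag} \circ g^{\#} \circ \Sigma^{*} z^{\dag} = \check{c}^{\dag}$.

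It remains to see that $\hat{c}^{\dag} = \check{c}^{\dag}$. Both are $T$-coalgebra homomorphisms into the final coalgebra $z$, out of $\hat{c}^{*}$ and $\check{c}^{*}$ respectively, so by finality they coincide as soon as $\hat{c}^{*} = \check{c}^{*}$, i.e.\ as soon as $(\lambda_{Z} \circ \Sigma^{*}\langle \id, z \rangle)^{*} = (\lambda_{Z} \circ \Sigma^{*}\langle \id, z^{*} \rangle)^{*}$ --- which is precisely \cref{prop:interm} applied to $f = z$, and this is the only place where the three compositionality criteria are used. I expect the only genuine obstacle to be the middle diagram chase identifying $g^{\#} \circ \Sigma^{*} z^{\dag}$ as a coalgebra homomorphism out of $\check{c}$: one must align three commuting squares, being careful that naturality of $\lambda$ contributes the factor $T\Sigma^{*} z^{\dag}$ while the defining square of $z^{\dag}$ is what converts a leading $z$ into a leading $z^{*}$, so that the $z$-versus-$z^{*}$ occurrences line up. Everything else is routine bookkeeping with the universal properties of $\Sigma^{*}$, of $(-)^{*}$, and of the final coalgebra, together with the already-established \cref{lem:weak} and \cref{prop:interm}.
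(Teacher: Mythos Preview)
Your proposal is correct and follows essentially the same route as the paper: reduce to the identity $z^{\dag}\circ g^{\#} = z^{\dag}\circ g^{\#}\circ \Sigma^{*}z^{\dag}$, recognise both sides as coalgebra homomorphisms into $z$ out of $(\lambda_{Z}\circ\Sigma^{*}\langle\id,z\rangle)^{*}$ and $(\lambda_{Z}\circ\Sigma^{*}\langle\id,z^{*}\rangle)^{*}$ respectively, invoke \cref{prop:interm} to identify those two rt-closures, and conclude by finality; then pull back along $\theta_{Z}$ and $\Sigma h^{!}$ using \cref{lem:weak} and the bialgebra-morphism property of $h^{!}$. The only cosmetic difference is that the paper phrases the passage to rt-closures as ``rt-closing preserves $T$-coalgebra homomorphisms'' rather than packaging it via \cref{lem:weak} as you do, but the underlying diagram chase (naturality of $\lambda$ together with the defining square of $z^{\dag}$) is identical.
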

\begin{proof}
  We first observe that the following diagram commutes due to finality of $z$
  (top left and bottom rectangle)
  and naturality of $\lambda$. Note that the horizontal lines are all $T$-coalgebras
  and $g^{\#}$ is the $\mathcal{EM}$-algebra induced by the denotational model
  $g$.
  \begin{center}
    \begin{tikzcd}[row sep=scriptsize]
      \Sigma^{*} Z
      \arrow[r, "\Sigma^{*}\langle \id {,} z^{*} \rangle"]
      \arrow[d, "\Sigma^{*}(z^{\dag})"']
      & \Sigma^{*} (Z \times TZ)
      \arrow[d, "\Sigma^{*}(\Id \times T)(z^{\dag})"]
      \arrow[r, "\lambda"]
      & T\Sigma^{*}Z
      \arrow[d, "T \Sigma^{*} (z^{\dag})"]
      \\
      \Sigma^{*} Z
      \arrow[r, "\Sigma^{*}\langle \id {,} z \rangle"]
      \arrow[d, "g^{\#}"']
      & \Sigma^{*} (Z \times TZ)
      \arrow[r, "\lambda"]
      & T\Sigma^{*}Z
      \arrow[d, "Tg^{\#}"]
      \\
      Z
      \arrow[rr, "z", "\cong"']& &
      TZ
    \end{tikzcd}
\end{center}

  Since rt-closing preserves $T$-coalgebra homomorphisms, rt-closing the
  $T$-coalgebras and finality of $z$ gives us

  \begin{center}
    \begin{tikzcd}[row sep=scriptsize]
      \Sigma^{*} Z
      \arrow[rr, "(\lambda \circ \Sigma^{*}\langle \id {,} z^{*} \rangle)^{*}"]
      \arrow[d, "\Sigma^{*}(z^{\dag})"']
      &
      & T\Sigma^{*}Z
      \arrow[d, "T \Sigma^{*} (z^{\dag})"]
      \\
      \Sigma^{*} Z
      \arrow[rr, "(\lambda \circ \Sigma^{*}\langle \id {,} z \rangle)^{*}"]
      \arrow[d, "g^{\#}"']
      &
      & T\Sigma^{*}Z
      \arrow[d, "Tg^{\#}"]
      \\
      Z
      \arrow[d, "z^{\dag}"']
      \arrow[rr, "z^{*}"]
      & & TZ
      \arrow[d, "T z^{\dag}"]
      \\
      Z
      \arrow[rr, "z", "\cong"']
      & & TZ
    \end{tikzcd}
  \end{center}

  Via \Cref{prop:interm} we have $(\lambda \circ \Sigma^{*}\langle \id {,} z^{*}
  \rangle)^{*} = (\lambda \circ \Sigma^{*}\langle \id {,} z \rangle)^{*}$. Since
  all homomorphisms on final coalgebras are unique, we see that $z^{\dag} \circ
  g^{\#} = z^{\dag} \circ g^{\#} \circ \Sigma^{*}(z^{\dag})$, which in turn
  makes the following diagram commute:
\begin{center}
  \begin{tikzcd}
    \Sigma Z
    \arrow[r, "\theta_{Z}"]
    \arrow[d, "\Sigma z^{\dag}"']
    &
    \Sigma^{*} Z
    \arrow[d, "\Sigma^{*} z^{\dag}"']
    \arrow[r, "g^{\#}"]
    & Z
    \arrow[d, "z^{\dag}"]
    \\
    \Sigma Z
    \arrow[r, "\theta_{Z}"]
    &
    \Sigma^{*} Z
    \arrow[r, "z^{\dag} \circ g^{\#}"]
    & Z
  \end{tikzcd}
\end{center}

Where $\theta : \Sigma \nat \Sigma^{*}$ is the universal natural transformation
sending $\Sigma$ to its free monad. But $g^{\#} \circ \theta = g$ and so we can conclude
\begin{align}
  \label{ff1}
  & z^{\dag} \circ g & = \qquad & z^{\dag} \circ g \circ
                                  \Sigma z^{\dag} & \\
  \label{ff2}
  \implies & z^{\dag} \circ g \circ \Sigma h^{!} & = \qquad & z^{\dag} \circ g \circ
                                                              \Sigma z^{\dag} \circ \Sigma h^{!} & \\
  \label{ff3}
  \implies & z^{\dag} \circ h^{!} \circ a & = \qquad & z^{\dag} \circ g \circ
                                                       \Sigma(z^{\dag} \circ h^{!}) & \\
  \label{ff4}
  \implies & h^{\dag} \circ a & = \qquad & z^{\dag} \circ g \circ \Sigma h^{\dag}.
\end{align}

\cref{ff2} gives \eqref{ff3} due to $h^{!}$ being a bialgebra morphism and finally
we have \eqref{ff4} by~\Cref{lem:weak}, which finishes the proof.
\end{proof}
Weak bisimilarity being a congruence is thus a simple corollary of \Cref{the:main,th:weakcoind}.

\begin{corollary}
  \label{cor:main}
  Let $\lambda : \Sigma^{*} (\Id \times T)
  \nat T\Sigma^{*}$ be a GSOS law of $\Sigma$ over weak-pullback preserving
  functor $T$ with $T$ being an $\cpov$-enriched monad as in \Cref{the:main}. If
  $\lambda$ satisfies the \emph{three compositionality criteria}, weak
  bisimilarity in $\lambda$ is a congruence.
\end{corollary}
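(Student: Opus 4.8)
The plan is simply to chain together the two results that immediately precede the corollary. First I would invoke \Cref{the:main} with the hypotheses at hand: since $\lambda$ is a GSOS law of $\Sigma$ over the $\cpov$-enriched monad $T$ and satisfies the three criteria, $h^{\dag} : A \to Z$ is a $\Sigma$-algebra homomorphism from the initial algebra $a : \luarrow{\Sigma A}{A}{\cong}{}$ to $z^{\dag} \circ g : \Sigma Z \to Z$. Next I would bring in \Cref{th:weakcoind}: because $T$ preserves weak pullbacks and the final coalgebra $z : Z \to TZ$ exists (both part of the \Cref{distr} setting inherited from \Cref{the:main}), weak bisimilarity on the operational model $h : A \to TA$ coincides with the greatest weak bisimulation, which in turn is exactly the pullback of the diagonal $\langle \id_Z,\id_Z\rangle : Z \rightarrowtail Z \times Z$ along $h^{\dag} \times h^{\dag}$. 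In other words, $x \approx y$ if and only if $h^{\dag}(x) = h^{\dag}(y)$, i.e. $\approx$ is the kernel pair of $h^{\dag}$.

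The conclusion is then the standard observation that the kernel pair of a $\Sigma$-algebra homomorphism is a $\Sigma$-congruence. Writing $R = A \times_Z A$ for that pullback, with projections $r_1, r_2 : R \to A$, the homomorphism property of $h^{\dag}$ together with the universal property of $R$ equips $R$ with an algebra structure $\kappa : \Sigma R \to R$ making both $r_1$ and $r_2$ into $\Sigma$-algebra morphisms; this is precisely the statement that $\approx$ is respected by every syntactic operator, i.e. is a congruence. Concretely in $\Set$: if $x_i \approx y_i$ for each argument of an operator $\sigma$, then $\Sigma h^{\dag}$ sends $\sigma(\vec x)$ and $\sigma(\vec y)$ to the same element, so $h^{\dag}(a(\sigma(\vec x))) = (z^{\dag}\circ g)(\Sigma h^{\dag}(\sigma(\vec x))) = (z^{\dag}\circ g)(\Sigma h^{\dag}(\sigma(\vec y))) = h^{\dag}(a(\sigma(\vec y)))$, whence $\sigma(\vec x) \approx \sigma(\vec y)$.

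I do not anticipate a real obstacle here; the corollary is essentially a bookkeeping step. The only points that deserve explicit mention are (i) verifying that the hypotheses of \Cref{th:weakcoind} are indeed in force — weak-pullback preservation of $T$ is assumed and $\cpov$-enrichment plus finality of $z$ come from \Cref{the:main} — and (ii) being careful that the kernel pair $R = A \times_Z A$, which \Cref{th:weakcoind} identifies with weak bisimilarity, genuinely lifts the $\Sigma$-algebra structure of $A$; this last lift is a one-line application of the pullback's universal property to the pair of maps built from $a$ and the homomorphism property of $h^{\dag}$, so no substantial work remains.
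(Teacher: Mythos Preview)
Your proposal is correct and follows exactly the approach the paper intends: the text simply remarks that the corollary is immediate from \Cref{the:main} and \Cref{th:weakcoind}, and you have spelled out precisely that combination, together with the standard fact that the kernel of a $\Sigma$-algebra homomorphism is a congruence.
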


\subsection{Lax models for weak bisimulation}
\label{subsec:lax}

Up-to techniques for bisimulation~\cite{DBLP:books/daglib/0067019,
  DBLP:books/cu/12/PousS12} are techniques
that simplify reasoning about behavioral equivalences, the main idea being that
instead of having to show that two processes are included in a bisimulation
relation, one can show that they are included in a different relation which is
\emph{sound} with respect to bisimulation. Bonchi et al.~\cite[Corollary
22]{DBLP:conf/concur/BonchiPPR15} proved that weak bisimulation up-to contextual
closure is a compatible (and hence sound) up-to technique for systems specified
in the simply WB cool rule format. This result is a
corollary of their main theorem~\cite[Theorem 20]{DBLP:conf/concur/BonchiPPR15}, which
requires the underlying system to be \emph{positive} and also the saturated
transition system to be a \emph{lax model} for the given
specification, requirements that are automatically true for systems in the simply WB
cool format.

As mentioned in \cref{subsec:crits}, the abstract equivalent of positivity for GSOS
specifications is monotonicity, which is weaker than continuity
(\Cref{crit1}). Thus, a GSOS law satisfying the three compositionality criteria is
monotone. As for lax models, our behaviors come with an order structure and
hence we can give the following definition.

\begin{definition}[Lax models for GSOS laws]
  \label{def:laxmod}
  Let $\lambda : \Sigma^{*} (\Id \times T)
  \nat T\Sigma^{*}$ be a GSOS law of $\Sigma$ over $T$ with $T$ being an
  $\cpov$-enriched monad. A \emph{lax
  $\lambda$-model} is a $\Sigma$-algebra and $T$-coalgebra pair $\bialg{\Sigma
    X}{g}{X}{h}{TX}$ making the following diagram commute laxly:
  \begin{center}
    \begin{tikzcd}
      \Sigma^{*} X \arrow[r, "g^{\#}"] \arrow[d, "{\Sigma^{*} \langle 1,h \rangle}"']
      \arrow[drr, phantom, "\vertleq"]
      & X \arrow[r, "h"]
      & T X
      \\
      \Sigma^{*} (X \product TX) \arrow[rr, "\lambda"]
      & & T \Sigma^{*} X \arrow[u, "T g^{\#}"']
    \end{tikzcd}
  \end{center}
  Where $g^{\#}$ is the respective $\mathcal{EM}$-algebra induced by $g$.
\end{definition}

In other words, a lax model is a relaxed version of a bialgebra
(\Cref{def:bialg}), implying that \emph{only}, but not necessarily \emph{all}, weak
transitions of a composite term can be deduced from the weak transitions of its
subterms. One non-example of a lax model is $\bialg{\Sigma A}{a}{A}{h^{*}}{BA}$
of~\spc~from \cref{ex:proc}. We can use the same problematic case
as \cref{lax:nonex}: the lower path on the bialgebra diagram for
process $\delta.\texttt{0} + \texttt{0}$ reveals transition
$\goesv{\delta.\texttt{0} + \texttt{0}}{\texttt{0}}{\tau}$, which is not a
transition of $\delta.\texttt{0} + \texttt{0}$.

For all their nice properties, there is little indication as to which GSOS
laws have lax models and why. Thus, in the context of our work, it
is sensible to ask if the three congruence criteria are adequate with respect to
producing lax models for GSOS laws, with the following theorem asserting that
this is indeed the case.

\begin{theorem}
  \label{laxtheorem}
  Let $\lambda : \Sigma^{*} (\Id \times T)
  \nat T\Sigma^{*}$ be a GSOS law of $\Sigma$ over $T$ with $T$ being an
  $\cpov$-enriched monad. If $\lambda$ satisfies the \emph{three compositionality criteria}, then for
  \emph{any} $\lambda$-bialgebra $\bialg{\Sigma X}{g}{X}{h}{TX}$, $\bialg{\Sigma
    X}{g}{X}{h^{*}}{TX}$ is a lax $\lambda$-model.
\end{theorem}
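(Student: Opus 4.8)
The plan is to prove that the lax-model square commutes laxly, i.e. that
\[
  h^{*} \circ g^{\#} \;\geq\; Tg^{\#} \circ \lambda \circ \Sigma^{*}\langle \id, h^{*}\rangle,
\]
or rather the inequality in the direction asserted by the diagram, by exploiting that $\bialg{\Sigma X}{g}{X}{h}{TX}$ is already a \emph{genuine} $\lambda$-bialgebra. The starting point is the strict bialgebra identity $h \circ g^{\#} = Tg^{\#} \circ \lambda \circ \Sigma^{*}\langle \id, h\rangle$ from \Cref{def:bialg}. Writing $\overline{\Sigma}f$ for $\lambda \circ \Sigma^{*}\langle \id, f\rangle$ as in the proof of \Cref{prop:interm}, this says $h \circ g^{\#} = Tg^{\#} \circ \overline{\Sigma}h$. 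I would first observe that $g^{\#}$ is not just a $\Sigma$-algebra map but an $\mathcal{EM}$-algebra for $\Sigma^{*}$, and that the bialgebra condition together with naturality of $\lambda$ and the structure laws of \Cref{prop:gsos} let one ``rt-close'' the bialgebra square: this is exactly the manoeuvre used in the proof of \Cref{the:main}, where rt-closing the horizontal $T$-coalgebras of the bialgebra diagram yields $h^{*} \circ g^{\#} = Tg^{\#} \circ (\overline{\Sigma}h)^{*}$ (here using that $g^{\#}$ being a $T$-coalgebra morphism $\overline{\Sigma}h \to h$, together with the functoriality of rt-closure from \Cref{satprop}, promotes it to a morphism $(\overline{\Sigma}h)^{*} \to h^{*}$).

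The second and crucial step is to compare $(\overline{\Sigma}h)^{*}$ with $\overline{\Sigma}(h^{*}) = \lambda \circ \Sigma^{*}\langle \id, h^{*}\rangle$. By \Cref{prop:interm} applied to the coalgebra $h : X \to TX$, we have $(\overline{\Sigma}(h^{*}))^{*} = (\overline{\Sigma}h)^{*}$; and since $\alpha \leq \alpha^{*}$ always (\Cref{def:free}), in particular $\overline{\Sigma}(h^{*}) \leq (\overline{\Sigma}(h^{*}))^{*} = (\overline{\Sigma}h)^{*}$. Post-composing with $Tg^{\#}$ — which is monotone because composition in the $\cpov$-enriched Kleisli category is monotone — gives
\[
  Tg^{\#} \circ \overline{\Sigma}(h^{*}) \;\leq\; Tg^{\#} \circ (\overline{\Sigma}h)^{*} \;=\; h^{*} \circ g^{\#}.
\]
Unfolding $\overline{\Sigma}(h^{*}) = \lambda \circ \Sigma^{*}\langle \id, h^{*}\rangle$, this is precisely the lax inequality $Tg^{\#} \circ \lambda \circ \Sigma^{*}\langle \id, h^{*}\rangle \leq h^{*} \circ g^{\#}$ required by \Cref{def:laxmod}.

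There are two small wrinkles to handle carefully. First, one must make sure that the monotonicity used to push $Tg^{\#}$ across the inequality is legitimate: $Tg^{\#}$ is a Kleisli morphism (an ordinary function composed with $\eta$, or more precisely the Kleisli lift of $g^{\#}$ applied on the $T$-side), and left/right composition in a $\cpov$-enriched category preserves $\leq$, so this is fine; I would state it as a one-line appeal to $\cpov$-enrichment. Second, the ``rt-closing the bialgebra square'' step needs the same lemma used implicitly in \Cref{the:main} — that rt-closure is functorial on $T$-coalgebra morphisms (\Cref{satprop}(i)) and that rt-closure commutes with the bialgebra structure via naturality of $\lambda$ — so I would simply cite the argument already given there rather than redo the diagram chase. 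The main obstacle, and the only place real content enters, is the appeal to \Cref{prop:interm}: that is where all three criteria are consumed, and everything else is formal manipulation of the $\cpov$-structure and of free monads. Hence the proof is genuinely short, reading essentially as: rt-close the bialgebra identity, apply \Cref{prop:interm} plus $\alpha \leq \alpha^{*}$, and conclude by monotonicity of composition.
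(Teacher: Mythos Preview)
Your proposal is correct and follows essentially the same route as the paper's proof: rt-close the bialgebra identity to obtain $h^{*} \circ g^{\#} = Tg^{\#} \circ (\overline{\Sigma}h)^{*}$ (via \Cref{satprop} and \Cref{def:bialg}), then invoke \Cref{prop:interm} together with $\alpha \leq \alpha^{*}$ from \Cref{def:free} to get $\overline{\Sigma}(h^{*}) \leq (\overline{\Sigma}(h^{*}))^{*} = (\overline{\Sigma}h)^{*}$, and conclude by monotonicity of Kleisli composition. The paper's argument is a terser version of exactly this; your additional remarks on monotonicity of post-composition with $Tg^{\#}$ simply make explicit what the paper leaves implicit.
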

\begin{proof}
  By \cref{satprop} and the definition of a bialgebra we have $T g^{\#} \circ (\lambda_{X}
  \circ \Sigma^{*} \langle \id {,} h \rangle)^{*} = h^{*} \circ g^{\#}$.
  Thus, to prove that $\bialg{\Sigma X}{g}{X}{h^{*}}{TX}$ is lax
  $\lambda$-model, it suffices to show that $\lambda \circ \Sigma^{*}\langle \id {,}
  h^{*} \rangle \leq (\lambda \circ \Sigma^{*}\langle \id {,} h \rangle)^{*}$.
  by \cref{prop:interm} and \Cref{def:free} we indeed have that
  $\lambda \circ \Sigma^{*}\langle \id {,}
  h^{*} \rangle \leq (\lambda \circ \Sigma^{*}\langle \id {,}
  h^{*} \rangle)^{*} = (\lambda \circ \Sigma^{*}\langle \id {,} h
  \rangle)^{*}$.
\end{proof}

It is worth noting that compatibility of the up-to context technique for weak
bisimulation entails the congruence property of weak bisimilarity, which means
that there is an alternative route to~\cref{cor:main} via \cref{prop:interm} and
\cite[Theorem 20]{DBLP:conf/concur/BonchiPPR15}. With that in mind, can lax
models work as a formal method for proving congruence of weak
bisimilarity like our three criteria? The problem is that proving laxness of a
model (typically the initial, rt-closed model $\bialg{\Sigma
  A}{a}{A}{h^{*}}{TA}$) involves non-trivial reasoning on an
rt-closed system that is itself defined inductively on the structure
of terms. Conversely, our three criteria are significantly easier to
establish as they characterize a GSOS law acting on a single layer of syntax.


\section{Conclusion}
In this paper we presented three abstract criteria over operational
semantics, given in the form of Turi and Plotkin's bialgebraic semantics, that
guarantee weak bisimilarity being a congruence. We believe that the criteria
gracefully balance between generality and usefulness but, as is often the case
with abstract results, this is something hard to assess accurately. What is
equally important however is that each of the criteria can be given an
intuitive explanation as to the kind of restriction it imposes on the
semantics. We hope that these insights can contribute towards a conclusive
answer to the general problem of full abstraction: the definition of the best
adequate denotational semantics, the underlying equivalence of which coincides
with contextual equivalence.




\bibliography{mainBiblio}

\appendix

\section{The simply WB cool rule format}
\label{app:simply}
In this section we introduce a few notions relevant to the simply WB cool rule
format taken from~\cite{DBLP:journals/tcs/Glabbeek11}. For some $n$-ary operator $o$
  in a language $\mathcal{L}$, the \emph{rules of o} are all the rules with
  source $o(x_{1},\dots,x_{n})$. The following characterize GSOS rules based on
  the shape(s) of its premise(s) and the shape of its conclusion.

  \begin{itemize}
  \item An operator in $\mathcal{L}$ is \emph{straight} if it has no rules in which a
    variable occurs multiple times in the left-hand side of its premises.
  \item An operator is \emph{smooth} if it is straight and has no rules in which
    a variable occurs both in the target and in the left-hand side of a premise.
  \item An argument $i \in \mathbb{N}$ of an operator $o$ is \emph{active} if
    there is a rule of $o$ in which $x_{i}$ is the left-hand side of a premise.
  \item A variable $x$ occurring in a term $t$ is \emph{receiving in $t$} if $t$ is the
    target of a rule in $\mathcal{L}$ in which $x$ is the right-hand side of
    a premise. An argument $i \in \mathbb{N}$ of an operator $o$ is receiving if
    a variable $x$ is receiving in a term $t$ that has a subterm
    $o(v_{1},\dots,v_{n})$ with $x$ occurring in $v_{i}$.
  \item A rule of the form of
    $\inference{\goesv{x_{i}}{y}{\tau}}{\goesv{o(x_{1},\dots,x_{n})}{o(x_{1},\dots,x_{n})[y/x_{i}]}{\tau}}$
    for $1 \leq i \leq n$ is called a patience rule for the $i$th argument of
    $o$, where $t[y/x]$ stands for term $t$ with all occurrences of $x$ replaced
    by $y$.
  \end{itemize}

  We can now give the complete definition of the simply WB cool rule format.

  \begin{definition}
    \label{def:simply}
    A GSOS language $\mathcal{L}$ is simply WB cool if it is positive and the
    following conditions are all true.
    \begin{enumerate}
    \item All operators in $\mathcal{L}$ are straight.
    \item The only rules in $\mathcal{L}$ with $\tau$-premises are patience rules.
    \item Every active argument of an operator has a patience rule.
    \item Every receiving argument of an operator has a patience rule.
    \item All operators in $\mathcal{L}$ are smooth.
    \end{enumerate}
  \end{definition}

\section{Selected proofs}
\label{sec:proofs}

In this section of the appendix we include the proofs of \cref{prop:cont} and
\cref{th:simply}.

\subsection{Equivalence of the two representations of the continuity criterion}

\begin{proof}[Proof of \cref{prop:cont}]
  We introduce the friendlier notation $\overline{\Sigma}f$ in place of
  $\lambda \circ \Sigma^{*} \langle \id, f\rangle$ for any $f : X \to TX$.
  \eqref{eq:crit1} $\implies$ \eqref{eq:crit1a} is immediate since $\rho = \lambda \circ \theta$, where
  $\theta : \Sigma \nat \Sigma^{*}$ is the universal natural transformation
  sending $\Sigma$ to its free monad. For \eqref{eq:crit1a} $\implies$
  \eqref{eq:crit1}, we first note that for each ``link'' $f_{i}$,
  $\overline{\Sigma}f_{i}$ is (equivalently) defined via ``structural recursion with
  accumulators'' (see~\cite[Theorem 5.1]{DBLP:conf/lics/TuriP97}), i.e. it is
  the unique morphism making the following diagram commute.
  \begin{center}
    \begin{tikzcd}
      \Sigma \Sigma^{*} X \arrow[rr, "\mu \circ \theta_{\Sigma^{*}}"] \arrow[d, "{\Sigma
        \langle 1,\overline{\Sigma}f_{i} \rangle}"]
      && \Sigma^{*} X \arrow[d, "{\overline{\Sigma}f_{i}}"',dashed, "\exists !"]
      & X \arrow[l, "\eta"'] \arrow[dl, "T\eta \circ f_{i}"]
      \\
      \Sigma (\Id \product T)\Sigma^{*} X
      \arrow[r, "\rho_{\Sigma^{*}}"]
     & T \Sigma^{*}\Sigma^{*}X \arrow[r, "T\mu"]
     & T \Sigma^{*} X
    \end{tikzcd}
	\end{center}

  This makes $\overline{\Sigma}f_{i}$ the (necessarily unique) \emph{homomorphic
    extension} of $T\mu \circ \rho_{\Sigma^{*}}$ along
  $T\eta \circ f_{i}$. Continuity of composition in $\kl{T}$ gives
  \begin{equation}
    \label{eq:crA}
    T\eta \circ \bigvee_{i} f_{i} = \bigvee_{i} (T\eta \circ f_{i})
  \end{equation}

  $T\eta \circ f_{i} = \overline{\Sigma}f_{i} \circ \eta$, thus $\bigvee_{i}
  (\overline{\Sigma}f_{i} \circ \eta)$ exists and also
  \begin{equation}
    \label{eq:crB}
    \bigvee_{i} (\overline{\Sigma}f_{i} \circ \eta) = (\bigvee_{i} \overline{\Sigma}f_{i}) \circ \eta
  \end{equation}

  Via similar reasoning, we have
  \begin{gather}
    \label{prcrit:eq2}
    T\mu \circ \bigvee_{i}
    (\rho_{\Sigma^{*}} \circ \Sigma \langle \id{,}\overline{\Sigma}f_{i}
    \rangle) = \bigvee_{i} (T\mu \circ \rho_{\Sigma^{*}} \circ \Sigma \langle
    \id{,}\overline{\Sigma}f_{i} \rangle) \\
    \label{prcrit:eq1}
    \bigvee_{i} (\overline{\Sigma}f_{i} \circ \mu \circ \theta_{\Sigma^{*}}) = (\bigvee_{i}
    \overline{\Sigma}f_{i}) \circ \mu \circ \theta_{\Sigma^{*}}
  \end{gather}

  \cref{eq:crit1a} allows us to rewrite \cref{prcrit:eq2} as
  \begin{equation}
    \label{prcrit:eq3}
    T\mu \circ \rho_{\Sigma^{*}} \circ \Sigma \langle \id{,} \bigvee_{i} \overline{\Sigma}f_{i} \rangle = \bigvee_{i} (T\mu \circ \rho_{\Sigma^{*}} \circ \Sigma \langle \id{,}\overline{\Sigma}f_{i} \rangle)
  \end{equation}

  By taking the supremum over $i$ of the $i$-dependent arrows in the previous diagram,
  \cref{eq:crA}, \eqref{eq:crB}, \eqref{prcrit:eq1} and \eqref{prcrit:eq3} allow
  us to present $\bigvee_{i} \overline{\Sigma}f_{i}$ as a morphism that makes
  the following diagram commute.
  \begin{center}
    \begin{tikzcd}
      \Sigma \Sigma^{*} X \arrow[rr, "\mu \circ \theta_{\Sigma^{*}}"] \arrow[d, "{\Sigma
        \langle 1,\bigvee_{i} \overline{\Sigma}f_{i} \rangle}"]
      && \Sigma^{*} X \arrow[d, "{\bigvee_{i} \overline{\Sigma}f_{i}}"']
      & X \arrow[l, "\eta"'] \arrow[dl, "T\eta \circ \bigvee_{i} f_{i}"]
      \\
      \Sigma (\Id \product T)\Sigma^{*} X \arrow[r, "\rho_{\Sigma^{*}}"]
      & T \Sigma^{*}\Sigma^{*}X \arrow[r, "T\mu"]
      & T \Sigma^{*} X
    \end{tikzcd}
	\end{center}

  But there can only be one such morphism, namely $\overline{\Sigma}(\bigvee_{i}
  f_{i})$, the unique homomorphic extension of $T\mu \circ
  \rho_{\Sigma^{*}}$ along $T\eta \circ \bigvee_{i} f_{i}$. In other words,
  $\bigvee_{i} \overline{\Sigma}f_{i} = \overline{\Sigma}(\bigvee_{i} f_{i})$.
\end{proof}

\subsection{Correspondence with the simply WB cool rule format}
\label{subsec:wbproof}

\begin{proof}[Proof of \cref{th:simply}]
  In order to prove that any language $\mathcal{L}$ in the simply WB cool format
  automatically satisfies the three criteria, it suffices to show that (the
  GSOS law induced by) any arbitrary rule in $\mathcal{L}$ satisfies them. First
  of all, we know that rules in the simply WB cool format are of the form
  $\inference{\{\goesv{x_{i}}{y_{i}}{c_{i}}~|~i \in
    I\}}{\goesv{o(x_{1},\dots,x_{n})}{t}{\alpha}}$ for $I \subseteq
  \{1,\dots,n\}$\cite[\S 3]{DBLP:journals/tcs/Glabbeek11}, or simply
  $\inference{\{\goesv{x_{i}}{y_{i}}{c_{i}}~|~i \in
    I\}}{\goesv{o(\overrightarrow{x})}{t}{\alpha}}$, where $o$ is an
  $n$-ary operator. We thus consider an arbitrary rule in the above
  form and proceed by distinguishing by the number of active arguments.

  \subsubsection{0 active arguments}
  All cases are trivial.

  \subsubsection{1 active argument}
  The rule is of the form
  $\inference{\goesv{x_{j}}{y}{c}}{\goesv{o(\overrightarrow{x})}{t}{\alpha}}$
  meaning that the rule is active
  on a position $j$. There is only a single premise because of the first
  requirement (straightness) in \cref{def:simply}.

  \paragraph{Patience rule}
  \label{par:pat}

  If $c = \tau$ then by the second requirement of \cref{def:simply} this rule
  has to be a patience rule of the form
  $\inference{\goesv{x_{j}}{y}{\tau}}{\goesv{o(\overrightarrow{x})}{o(\overrightarrow{x})[y/x_{j}]}{\tau}}$.

  \subparagraph{\cref{crit1}}
  \begin{gather*}
    \bigvee_{i} \rho (o(\overrightarrow{x}, f_{i}(\overrightarrow{x}))) =
    \bigvee_{i} \{ \tau, o(\overrightarrow{x})[y/x_{j}]~|~(\tau, y_{i}) \in
    f_{i}(x_{j})\} =  \\ \{ \tau, o(\overrightarrow{x})[y/x_{j}]~|~(\tau, y_{i}) \in
    \bigvee_{i} (f _{i}(x_{j}))\} = \rho (o(\overrightarrow{x}, \bigvee_{i} (f_{i}(\overrightarrow{x})))
  \end{gather*}

  \subparagraph{\cref{crit2}}
  $\rho (o(\overrightarrow{x}, (\eta \vee f)(\overrightarrow{x})))$ induces a
  single transition
  $\goesv{o(\overrightarrow{x})}{o(\overrightarrow{x})}{\tau}$ for all $f$,
  which is included in $\eta_{x} \circ \theta_{X}$.

  \subparagraph{\cref{crit3}}

  The only transitions in $\rho (o(\overrightarrow{x}, (f \diamond
  f)(\overrightarrow{x})))$ are
  $\goesv{o(\overrightarrow{x})}{o(\overrightarrow{x})[z/x_{j}]}{\tau}$ when
  $\goesv{x_{j}}{\goesv{y}{z}{\tau}}{\tau}$, i.e. $(\tau, y) \in
  f(x_{j}), (\tau, z) \in g(y)$ for some $y,z$. Running $(\lambda_{X} \circ
  \Sigma^{*} \langle \id {,} f \rangle) \diamond
  (\rho_{X} \circ \Sigma \langle \id {,} f
  \rangle)$~\footnote{Recall that $(\lambda_{X} \circ \Sigma^{*} \langle \id {,} f
    \rangle) \circ \theta = \rho_{X} \circ \Sigma \langle \id {,} f
    \rangle$.},
  which is the second iteration on the right side, we can see that
  $\goesv{o(\overrightarrow{x})}{\goesv{o(\overrightarrow{x})[y/x_{j}]}{o(\overrightarrow{x})[z/y]}{\tau}}{\tau}$,
  which satisfies the criterion.

  \paragraph{Impatient rule}
  \label{par:impat}

  This time we have $c \neq \tau$ which entails, by the third requirement of
  \cref{def:simply}, the presence of a patience rule for argument $j$.

  \subparagraph{\cref{crit1}}
  Similar to \ref{par:pat}.
  \begin{gather*}
    \bigvee_{i} \rho (o(\overrightarrow{x}, f_{i}(\overrightarrow{x}))) =
    \bigvee_{i} \{c, o(\overrightarrow{x})[y/x_{j}]~|~(c, y_{i}) \in
    f_{i}(x_{j})\} =  \\ \{c, o(\overrightarrow{x})[y/x_{j}]~|~(c, y_{i}) \in
    \bigvee_{i} (f _{i}(x_{j}))\} = \rho (o(\overrightarrow{x}, \bigvee_{i} (f_{i}(\overrightarrow{x})))
  \end{gather*}

  \subparagraph{\cref{crit2}}
  Similarly to \ref{par:pat}, the presence of the patience rule for argument $j$
  means that there is always transition
  $\goesv{o(\overrightarrow{x})}{o(\overrightarrow{x})}{\tau}$ for all $f$ on
  the left side, which is included on the right side by $\eta_{x} \circ
  \theta_{X}$. The transitions induced by $f$ exist on both sides.

  \subparagraph{\cref{crit3}}
  Transitions on the left side occur if and only if there are $w, z$ such that
  $\goesv{x_{j}}{\goesv{w}{z}{c}}{\tau}$ or
  $\goesv{x_{j}}{\goesv{w}{z}{\tau}}{c}$, i.e. $(\tau, w) \in
  f(x_{j}), (c, z) \in f(w)$ or $(c, w) \in
  f(x_{j}), (\tau, z) \in f(w)$. We also have to distinguish between $y$ in premise
 $\goesv{x_{j}}{y}{c}$ being receiving in $t$ or not. For each case, we give the
 transition(s) on the left side (of \cref{crit3}) and the respective iteration step where the
 left-side transitions appear on the right side (of \cref{crit3}).
 \begin{enumerate}
 \item $y$ is not receiving, $\goesv{x_{j}}{\goesv{w}{z}{c}}{\tau}$.
   \begin{itemize}
   \item $\rho_{X} \circ \Sigma \langle \id
     {,} f \diamond f \rangle$: $\goesv{o(\overrightarrow{x})}{t}{\alpha}$
   \item $(\lambda_{X} \circ \Sigma^{*} \langle \id {,} f \rangle) \diamond
     (\rho_{X} \circ \Sigma \langle \id {,} f \rangle)$:
     $\goesv{o(\overrightarrow{x})}{\goesv{o(\overrightarrow{x})[w/x_{j}]}{t}{\alpha}}{\tau}$
   \end{itemize}
 \item $y$ is not receiving, $\goesv{x_{j}}{\goesv{w}{z}{\tau}}{c}$.
   \begin{itemize}
     \item $\rho_{X} \circ \Sigma \langle \id
       {,} f \diamond f \rangle$: $\goesv{o(\overrightarrow{x})}{t}{\alpha}$
     \item $\rho_{X} \circ \Sigma \langle \id {,} f \rangle$:
       $\goesv{o(\overrightarrow{x})}{t}{\alpha}$
 \end{itemize}
\item $y$ is receiving, $\goesv{x_{j}}{\goesv{w}{z}{c}}{\tau}$.
  \begin{itemize}
  \item $\rho_{X} \circ \Sigma \langle \id
    {,} f \diamond f \rangle$: $\goesv{o(\overrightarrow{x})}{t(z)}{\alpha}$
  \item $(\lambda_{X} \circ \Sigma^{*} \langle \id {,} f \rangle) \diamond
    (\rho_{X} \circ \Sigma \langle \id {,} f \rangle)$:
    $\goesv{o(\overrightarrow{x})}{\goesv{o(\overrightarrow{x})[w/x_{j}]}{t(z)}{\alpha}}{\tau}$
   \end{itemize}
 \item $y$ is receiving, $\goesv{x_{j}}{\goesv{w}{z}{\tau}}{c}$ \\
   This is the trickiest case. Let us look back at the rule in question (with
   $y$ receiving), which is
   $\inference{\goesv{x_{j}}{y}{c}}{\goesv{o(\overrightarrow{x})}{t(y)}{\alpha}}$.
   The key observation is that requirement 4 in \cref{def:simply}, requiring that every
   receiving argument of an operator has a patience rule, implies that no matter
   how complex the receiving expression $t(y)$ is, there will be patience rules in
   place to ensure a derivation amounting to
   $\inference{\goesv{x}{y}{\tau}}{\goesv{s(x)}{s(y)}{\tau}}$ for each
   sub-expression $s(y)$ in $t(y)$ that ``receives'' $y$. The number of
   iterations on the right-hand required in order to trigger all
   necessary patience rules depends on the number of sub-expressions $s$.

   For example, let $t(y) \triangleq d(e(l(y,0)),e(l(0,y)))$, where $d,l$ are
   binary operations, $e$ is a unary operation and $0$ is some term. Due to
   requirement 4, $d, l, e$ will all have patience rules in all positions and we
   need exactly two iterations to trigger the two patience rules in each of the
   positions of $d$. More generally,
   \begin{itemize}
   \item $\rho_{X} \circ \Sigma \langle \id
     {,} f \diamond f \rangle$: $\goesv{o(\overrightarrow{x})}{t(z)}{\alpha}$
   \item $(\lambda_{X} \circ \Sigma^{*} \langle \id {,} f \rangle)^{*} \diamond
     (\rho_{X} \circ \Sigma \langle \id {,} f \rangle)$:
     $\goesv{o(\overrightarrow{x})}{\goesv{t(w)}{t(z)}{\tau^{*}}}{\alpha}$.
   \end{itemize}
 \end{enumerate}
 
 \subsubsection{2 or more active arguments}
 
 We first note that the only rules with $\tau$-premises are
 patience rules, which are already covered in \ref{par:pat} and so we move on
 to $c_{1},c_{2} \not = \tau$ with the rule being of the form of
 $\inference{\goesv{x_{i}}{y_{1}}{c_{1}} &
   \goesv{x_{j}}{y_{2}}{c_{2}}}{\goesv{o(\overrightarrow{x})}{t}{\alpha}}$. The
 third requirement of \cref{def:simply} means that there are patience rules for
 arguments $i$ and $j$ in $o$.

 \subparagraph{\cref{crit1}}
 Similar to the case for \cref{crit1} in \ref{par:pat}.

 \subparagraph{\cref{crit2}}
 Similar to the case for \cref{crit2} in \ref{par:pat}.

 \subparagraph{\cref{crit3}}
 There are four separate cases where a transition occurs
 in $\rho (o(\overrightarrow{x}, (f \diamond f)(\overrightarrow{x})))$, as a
 $c_{1}$-transition and a $c_{2}$-transition may occur in either step for both subterms.
 \begin{enumerate}
 \item $\goesv{x_{i}}{\goesv{w_{1}}{z_{1}}{c_{1}}}{\tau}$ and 
   $\goesv{x_{j}}{\goesv{w_{2}}{z_{2}}{c_{2}}}{\tau}$
 \item $\goesv{x_{i}}{\goesv{w_{1}}{z_{1}}{c_{1}}}{\tau}$ and 
   $\goesv{x_{j}}{\goesv{w_{2}}{z_{2}}{\tau}}{c_{2}}$
 \item $\goesv{x_{i}}{\goesv{w_{1}}{z_{1}}{\tau}}{c_{1}}$ and 
   $\goesv{x_{j}}{\goesv{w_{2}}{z_{2}}{\tau}}{c_{2}}$
 \item $\goesv{x_{i}}{\goesv{w_{1}}{z_{1}}{\tau}}{c_{1}}$ and 
   $\goesv{x_{j}}{\goesv{w_{2}}{z_{2}}{c_{2}}}{\tau}$.
 \end{enumerate}
 In addition, $y_{1}$ and $y_{2}$ can each be either receiving or not receiving
 in $t$ and, as this affects transitions the same way as $y$ being receiving in
 \ref{par:impat}.
 \begin{enumerate}
 \item $\goesv{x_{i}}{\goesv{w_{1}}{z_{1}}{c_{1}}}{\tau}$ and 
   $\goesv{x_{j}}{\goesv{w_{2}}{z_{2}}{c_{2}}}{\tau}$

   Here, whether $y_{1}$ and $y_{2}$ are receiving or not does not make a
   difference and we write $t(y_{1},y_{2})$ to denote a term which potentially has instances
   of $y_{1}$ and $y_{2}$.
   \begin{itemize}
   \item $\rho_{X} \circ \Sigma \langle \id
     {,} f \diamond f \rangle$: $\goesv{o(\overrightarrow{x})}{t(z_{1},z_{2})}{\alpha}$
   \item
     $(\lambda_{X} \circ \Sigma^{*} \langle \id {,} f \rangle) \diamond
     (\lambda_{X} \circ \Sigma^{*} \langle \id {,} f \rangle) \diamond
     (\rho_{X} \circ \Sigma \langle \id {,} f \rangle)$: \\
     $\goesv{o(\overrightarrow{x})}{\goesv{o(\overrightarrow{x})[w_{1}/x_{i}]}{\goesv{o(\overrightarrow{x})[w_{1}/x_{i}][w_{2}/x_{j}]}{t(z_{1},z_{2})}{\alpha}}{\tau}}{\tau}$
   \end{itemize}

   We can see that we need to iterate three times on the right-hand side: one to
   trigger the patience rule on position $i$, one to trigger the patience rule
   on position $j$ on the new term and one more to trigger the main rule.
   
 \item $\goesv{x_{i}}{\goesv{w_{1}}{z_{1}}{c_{1}}}{\tau}$ and 
   $\goesv{x_{j}}{\goesv{w_{2}}{z_{2}}{\tau}}{c_{2}}$

   In this case $y_{2}$ being receiving makes a difference, while $y_{1}$ does not. Let us
   first deal with the non-receiving case for $y_{2}$.
   \begin{itemize}
   \item $\rho_{X} \circ \Sigma \langle \id
     {,} f \diamond f \rangle$: $\goesv{o(\overrightarrow{x})}{t(z_{1})}{\alpha}$
   \item
     $ (\lambda_{X} \circ \Sigma^{*} \langle \id {,} f \rangle) \diamond
     (\rho_{X} \circ \Sigma \langle \id {,} f \rangle)$: $\goesv{o(\overrightarrow{x})}{\goesv{o(\overrightarrow{x})[w_{1}/x_{i}]}{t(z_{1})}{\alpha}}{\tau}$
   \end{itemize}
   The first iteration will trigger the patience rule for $i$, while the second
   produces the $\alpha$-transition. Variable $y_{2}$ is not receiving and so
   nothing else is needed. On the other hand, if $y_{2}$ is receiving, we see that
   \begin{itemize}
   \item $\rho_{X} \circ \Sigma \langle \id
     {,} f \diamond f \rangle$: $\goesv{o(\overrightarrow{x})}{t(z_{1},z_{2})}{\alpha}$
   \item
     $(\lambda_{X} \circ \Sigma^{*} \langle \id {,} f \rangle)^{*} \diamond
     (\lambda_{X} \circ \Sigma^{*} \langle \id {,} f \rangle) \diamond
     (\rho_{X} \circ \Sigma \langle \id {,} f \rangle)$: \\
     $\goesv{o(\overrightarrow{x})}{\goesv{o(\overrightarrow{x})[w_{1}/x_{i}]}{\goesv{t(z_{1},
           w_{2})}{t(z_{1},z_{2})}{\tau^{*}}}{\alpha}}{\tau}$
   \end{itemize}
   Similarly to the fourth case of \cref{crit3} in \ref{par:impat}, requirement
   4 in \cref{def:simply} guarantees that there will be a sequence of patience
   rules that gives $\goesv{t(z_{1}, w_{2})}{t(z_{1},z_{2})}{\tau^{*}}$.
 \item $\goesv{x_{i}}{\goesv{w_{1}}{z_{1}}{\tau}}{c_{1}}$ and 
   $\goesv{x_{j}}{\goesv{w_{2}}{z_{2}}{\tau}}{c_{2}}$

   If $y_{1}$ and $y_{2}$ are not receiving, this is very similar to the first
   case. If $y_{1}$ and/or $y_{2}$ are receiving, then requirement 4 in
   \cref{def:simply} comes into play in the same manner as before. For instance,
   if both $y_{1}$ and $y_{2}$ are receiving:
   \begin{itemize}
   \item $\rho_{X} \circ \Sigma \langle \id
     {,} f \diamond f \rangle$: $\goesv{o(\overrightarrow{x})}{t(z_{1},z_{2})}{\alpha}$
   \item
     $(\lambda_{X} \circ \Sigma^{*} \langle \id {,} f \rangle)^{*} \diamond
     (\lambda_{X} \circ \Sigma^{*} \langle \id {,} f \rangle)^{*} \diamond
     (\rho_{X} \circ \Sigma \langle \id {,} f \rangle)$: \\
     $\goesv{o(\overrightarrow{x})}{\goesv{t(w_{1},w_{2})}{\goesv{t(z_{1},
           w_{2})}{t(z_{1},z_{2})}{\tau^{*}}}{\tau^{*}}}{\alpha}$
   \end{itemize}
 \item $\goesv{x_{i}}{\goesv{w_{1}}{z_{1}}{\tau}}{c_{1}}$ and 
   $\goesv{x_{j}}{\goesv{w_{2}}{z_{2}}{c_{2}}}{\tau}$
   
   Very similar to the second case.
 \end{enumerate}

 In the presence of three or more active arguments we can apply the same principles,
 the only difference being that the maximum number of necessary iterations on
 the right side will be higher, as more patience rules will have to be triggered.
\end{proof}


\end{document}